\documentclass[aps,prxquantum,twocolumn,superscriptaddress,longbibliography]{revtex4-2}

\pdfoutput=1

\usepackage{amsmath,amssymb,amsfonts}
\usepackage{amsthm}
\usepackage{physics}
\usepackage{braket}
\usepackage{microtype}

\usepackage{graphicx}
\usepackage{tikz}
\usepackage{subcaption}

\usepackage{multirow}
\usepackage{booktabs}

\usepackage[ruled,vlined]{algorithm2e}

\usepackage{xcolor}
\usepackage{tcolorbox}

\usepackage[english]{babel}

\usepackage[colorlinks=true,linkcolor=blue,citecolor=blue,urlcolor=blue]{hyperref}

\newtheorem{theorem}{Theorem}

\newtheorem{corollary}[theorem]{Corollary}
\newtheorem{lemma}[theorem]{Lemma}

\newcommand{\Var}{\operatorname{Var}}

\begin{document}

\title{Certified Fidelity Susceptibility from Classical Shadows}

\author{Kiran Adhikari}
\email{kiran.adhikari@tum.de}
\affiliation{Emmy Noether Group for Theoretical Quantum Systems Design\\
                    Technical University of Munich, Germany}

\begin{abstract}
Fidelity susceptibility is a useful probe of quantum phase transitions and quantum metrology, but its direct evaluation on a quantum device is challenging. By integrating Krylov-subspace methods with a resolvent-based reformulation, we present a method for certifying fidelity susceptibility using randomized single-copy measurements across repeated ground-state preparations. The resulting approximations form monotone lower bounds to the exact fidelity susceptibility and converge geometrically. We also present applications to metrology and linear static susceptibilities. 

\end{abstract}

\maketitle

\section{Introduction}

 Quantum Phase transitions \cite{Sachdev_1999} are crucial for diverse phenomena ranging from superconductivity \cite{Ginzburg:1950sr} to topological phases \cite{Wen:1989zg, kharel2026pseudoentropytopologicalphases}. Fidelity susceptibility provides a powerful tool for detecting quantum phase transitions even without prior knowledge of the order parameters \cite{GU_2010, Gu_2008, Zanardi_2006, You_2007, Wang_2015}. Moreover, it is also related to Fisher Information \cite{Liu_2019}, and thus to quantum metrology \cite{Giovannetti_2006,PhysRevLett.100.100503, Hyllus_2012, Pezz__2009}. 
 
Estimating fidelity susceptibility directly on a quantum device is challenging as the standard representation requires Hamiltonians' full spectrum \cite{Wang_2015}, while an alternative finite-difference approach requires fidelities between nearby ground states, which amplify statistical noise \cite{Di_Matteo_2022}. Classical shadow was proposed as a powerful technique for estimating many observables from a few measurements \cite{Huang_2020, Huang_2021}, which motivates the question if fidelity susceptibility can be reconstructed directly from measurements of a single reference ground state.

Recent approaches include variational methods \cite{Tan_2021, Patel_2025, Minervini_2026} and the quantum singular value transformation (QSVT) \cite{nkrh-5lbl,Gily_n_2019, PRXQuantum.2.040203}, but either they require deep circuits, or they are not guaranteed to work in practical situations. Upper and lower bounds on the fidelity susceptibility, expressed in terms of thermodynamic susceptibilities and thermal averages, have been derived \cite{PhysRevE.85.031115}, but not in the context of quantum computation.
Related randomized measurement protocols exist that construct lower bounds to the quantum Fisher information \cite{rathQuantumFisherInformation2021,
vitaleRobustEstimationQuantum2024}, while Krylov shadow tomography
\cite{zhangKrylovShadowTomography2025a,
wangSuperiorityKrylovShadow2026,alishahiha2026krylov,chowdhury2026polynomial} provides an efficient moment-based reconstruction of such quantities, though it requires multicopy snapshots \cite{Huang_2020}. Krylov techniques have also been explored in the context of continuous-variable systems \cite{Adhikari:2022whf,Adhikari:2023evu,Adhikari:2023umo} and information scrambling \cite{Adhikari:2023umo}.

In this work, we integrate the resolvent formulation of fidelity
susceptibility with a Krylov-subspace construction tailored to the
Hamiltonian response~\cite{Morawetz_2025}.  The resulting approximation depends only on moments, which are linear expectation values of the reference ground-state density
matrix and can therefore be estimated from a single-copy classical-shadow data
set. We obtain a monotone hierarchy of certified lower bounds to the fidelity
susceptibility. The quantum part of the protocol, therefore, requires only randomized single-copy measurements, while the
Krylov reconstruction and certification are performed classically.

\section{Fidelity Susceptibility in Classical Shadow Friendly Form}
\label{subsec:fidelity_definition}

Physically, Fidelity susceptibility quantifies the ground-state sensitivity to perturbations of a control parameter $\lambda$. Throughout this work, we assume a finite-dimensional Hilbert space, the Hamiltonian $H(\lambda)$ has a non-degenerate ground-state energy $E_0(\lambda)$ with a normalized ground state
$|\Psi_0(\lambda)\rangle$ and spectral gap
$\Delta = E_1 - E_0 > 0$. These are standard assumptions in studies of quantum phase transitions
and ground-state preparation \cite{Lin_2020}.  We then write the Hamiltonian as:
\begin{equation}
    H(\lambda+\epsilon)
=
H(\lambda)+\epsilon H_I+O(\epsilon^2), \quad H_I
=
\partial_\lambda H(\lambda)
\label{eq:perturbed_hamiltonian}
\end{equation}
where, $H_I$ is the driving term, $\epsilon$ is the small perturbation, and $\lambda$ is the control parameter. The ground-state fidelity is defined by
\begin{equation}
   F(\lambda,\lambda+\epsilon)
=
\left|
\langle \Psi_0(\lambda)|\Psi_0(\lambda+\epsilon)\rangle
\right| 
\end{equation}
where, $ \ket{\Psi_0(\lambda)}$ and $\ket{\Psi_0(\lambda+\epsilon)}$ are the unperturbed and perturbed ground states respectively.  For small $\epsilon$, the fidelity becomes
$F(\lambda,\lambda+\epsilon)
=
1-\frac{\epsilon^2}{2}\chi_F(\lambda)+O(\epsilon^3),$ where $\chi_F(\lambda)$ is the fidelity susceptibility:
\begin{equation}
\chi_F(\lambda)
=
-\left.
\frac{\partial^2}{\partial \epsilon^2}
\ln F(\lambda,\lambda+\epsilon)
\right|_{\epsilon=0}.
\label{eq:fidelity_susceptibility_definition}
\end{equation}
There are several reformulations of fidelity susceptibility \cite{Campos_Venuti_2007, GU_2010, You_2007}. A direct perturbative expansion (Appendix \ref{sec:spectral_representation}) yields: 
\begin{equation}
\chi_F(\lambda)
=
\sum_{j \neq 0}
\frac{
|\langle\Psi_j (\lambda)|H_I|\Psi_0 (\lambda)\rangle|^2
}{
[E_j(\lambda) -E_0 (\lambda)]^2
}.
\label{eq:fidelity_susceptibility_spectral_main}
\end{equation}
Although exact, it is not suited for classical shadows, as it involves the excited-state spectrum and transition matrix elements. To eliminate this, we define
\begin{equation}
K:=H-E_0 I, \qquad
    K^+
=
\sum_{j=1}^{D-1}
\frac{1}{(E_j-E_0)}
|\Psi_j\rangle\langle\Psi_j|.\nonumber
\end{equation}
where, $K^+$ is the Moore--Penrose pseudoinverse, and the parameter $\lambda$ is dropped for notational simplicity. Introducing connected perturbation 
\begin{equation}
    B
    :=
    H_I-\langle\Psi_0|H_I|\Psi_0\rangle I, \qquad \ket{\phi} := B \ket{\psi_0},
\end{equation}
the fidelity susceptibility becomes
\begin{equation}
\label{eq:chi_F}
    \chi_F
    =
    \bra{\phi}(K^+)^2\ket{\phi}.
\end{equation}
The remaining difficulty is that $(K^+)^2$ is highly non-local and cannot be readily extracted via quantum devices. 

Instead of implementing it directly, we employ polynomial functional calculus (Appendix \ref{subsec:fs_moment_reduction}) to reduce the problem to response moments
\begin{align}
M_k
&:=
\bra{\phi}K^k\ket{\phi}
\nonumber\\
&=
\bra{\Psi_0}BK^kB\ket{\Psi_0}
=
\operatorname{Tr}\!\left(\rho_0 BK^kB\right),
\label{eq:response_moments}
\end{align}
where $\rho_0=\ket{\Psi_0}\bra{\Psi_0}$. Each $M_k$ is therefore a linear
functional of $\rho_0$ and can be estimated from
a classical-shadow data set. The quantum estimation problem is thus
reduced to the classical moment-reconstruction problem
\begin{equation}
\{M_k\}
\longrightarrow
\chi_F .
\label{eq:classical_moment_problem}
\end{equation}
The next section develops the Krylov reconstruction that performs this
map using finitely many moments.

\section{Krylov reconstruction}
The central idea of this approach is to approximate $K^{+}\ket{\phi}$  rather than the operator $K^{+}$.  For this, we introduce the order-$n$ Krylov subspace \cite{10.1093/acprof:oso/9780199655410.001.0001,Adhikari:2025vdl,Alishahiha:2026fnu,Gautschi_1996}
\begin{equation}
    \mathcal{V}_n
    :=
    \operatorname{span}
    \left\{
        K\ket{\phi},
        K^2\ket{\phi},
        \ldots,
        K^n\ket{\phi}
    \right\},
\end{equation}
and the exact inverse-response vector
\begin{equation}
    \ket{y}:=K^+\ket{\phi},
    \qquad
    \chi_F=\|\ket{y}\|^2.
\end{equation}

 Let $P_{n}$ denote the orthogonal projector onto
$\mathcal{V}_{n}$ such that
\begin{equation}
    \ket{y}
    =
    P_{n}\ket{y}
    +
    \bigl(\mathbb{I}-P_{n}\bigr)\ket{y},
\end{equation}
where $ P_{n}\ket{y}\in\mathcal{V}_{n},$ $\bigl(\mathbb{I}-P_{n}\bigr)\ket{y}
    \in\mathcal{V}_{n}^{\perp}.$ Geometrically, $P_{n}\ket{y}$ is the vector in $\mathcal{V}_{n}$ that
best approximates $\ket{y}$:
\begin{equation}
    P_{n}\ket{y}
    =
    \underset{\ket{x}\in\mathcal{V}_{n}}
    {\operatorname{argmin}}
    \,
    \|\ket{y}-\ket{x}\|^{2}.
\end{equation}

With this motivation, we define the order-$n$ variational approximation to the fidelity susceptibility as
\begin{equation}
    \chi_{F}^{(n)}
    :=
    \left\|P_{n}\ket{y}\right\|^{2}.
\end{equation}

\begin{theorem}[Monotone lower bounds]
\label{thm:krylov-lower-bound}
The Krylov approximations $\chi_F^{(n)}$ satisfy
\begin{equation}
    \chi_F-\chi_F^{(n)}
    =
    \|(\mathbb{I}-P_n)\ket{y}\|^2
    \geq0,
\label{eq:projector}
\end{equation}
and form the monotone hierarchy
\begin{equation}
    0\leq
    \chi_F^{(1)}
    \leq
    \chi_F^{(2)}
    \leq\cdots\leq
    \chi_F .
\label{eq:monotone-hierarchy}
\end{equation}
\end{theorem}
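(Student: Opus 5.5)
The argument is pure Hilbert-space geometry: orthogonal decomposition plus nestedness of the Krylov subspaces. It uses none of the spectral structure of $K$ beyond the definition $\ket{y}=K^+\ket{\phi}$ and $\chi_F=\|\ket{y}\|^2$, which was already established from Eq.~\eqref{eq:chi_F}.

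\textbf{Error identity.} Since $P_n$ is an orthogonal projector, $P_n\ket{y}\in\mathcal{V}_n$ and $(\mathbb{I}-P_n)\ket{y}\in\mathcal{V}_n^\perp$ are orthogonal. Their cross term vanishes, because
\begin{equation}
\bra{y}P_n(\mathbb{I}-P_n)\ket{y}=0,
\end{equation}
using $P_n^\dagger=P_n$ and $P_n^2=P_n$. The Pythagorean theorem then gives
\begin{equation}
\chi_F=\|\ket{y}\|^2=\|P_n\ket{y}\|^2+\|(\mathbb{I}-P_n)\ket{y}\|^2=\chi_F^{(n)}+\|(\mathbb{I}-P_n)\ket{y}\|^2 .
\end{equation}
This is Eq.~\eqref{eq:projector}, and nonnegativity of the norm gives $\chi_F^{(n)}\le\chi_F$. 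Trivially $\chi_F^{(n)}\ge 0$, since it is also a squared norm.

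\textbf{Monotonicity.} By definition $\mathcal{V}_n\subseteq\mathcal{V}_{n+1}$, since the spanning set of $\mathcal{V}_n$ is contained in that of $\mathcal{V}_{n+1}$. Hence $P_{n+1}P_n=P_n$. Taking adjoints gives $P_nP_{n+1}=P_n$. Therefore $P_{n+1}-P_n$ is self-adjoint and idempotent, i.e.\ an orthogonal projector; it projects onto $\mathcal{V}_{n+1}\ominus\mathcal{V}_n$. It follows that
\begin{equation}
\chi_F^{(n+1)}-\chi_F^{(n)}=\bra{y}(P_{n+1}-P_n)\ket{y}=\|(P_{n+1}-P_n)\ket{y}\|^2\ge 0 .
\end{equation}
The same conclusion follows from the variational characterization: the best-approximation error $\min_{x\in\mathcal{V}}\|\ket{y}-\ket{x}\|^2$ can only decrease as $\mathcal{V}$ grows. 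By the error identity, $\chi_F^{(n)}=\chi_F-\min_{x\in\mathcal{V}_n}\|\ket{y}-\ket{x}\|^2$, so $\chi_F^{(n)}$ can only increase. Chaining these inequalities with the error identity yields Eq.~\eqref{eq:monotone-hierarchy}.

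\textbf{Main obstacle.} No step is genuinely hard. The only point requiring care is the degenerate case in which the Krylov sequence stagnates, i.e.\ $K^{n+1}\ket{\phi}\in\mathcal{V}_n$. Then $\mathcal{V}_{n+1}=\mathcal{V}_n$ and the inequality is an equality, which is harmless: projectors are defined for any subspace, so the argument needs no linear-independence assumption.

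It is also worth remarking, though not needed for the theorem, that the hierarchy eventually becomes exact. The vector $\ket{\phi}=B\ket{\Psi_0}$ is orthogonal to $\ket{\Psi_0}$ because $\bra{\Psi_0}B\ket{\Psi_0}=0$. Hence $\ket{y}$ lies in the span of the excited eigenvectors onto which $\ket{\phi}$ has nonzero components. The polynomial-in-$K$ construction with zero constant term then reaches $\ket{y}$ once $n$ equals the number of distinct gaps $E_j-E_0$ in that support. I would mention this only as context for the later convergence result.
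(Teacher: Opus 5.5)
Your proof is correct and matches the paper's argument: the Pythagorean split $\|\ket{y}\|^2=\|P_n\ket{y}\|^2+\|(\mathbb{I}-P_n)\ket{y}\|^2$ gives the error identity, and the nesting $\mathcal{V}_n\subseteq\mathcal{V}_{n+1}$ makes $P_{n+1}-P_n$ an orthogonal projector, which gives monotonicity. Your explicit check via $P_{n+1}P_n=P_n$ and your remark on Krylov stagnation only spell out details the paper leaves implicit.
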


\begin{proof}
Since $P_n$ is an orthogonal projector,
\begin{equation}
    \ket{y}
    =
    P_n\ket{y}
    +
    (\mathbb{I}-P_n)\ket{y},
\end{equation}
with mutually orthogonal components. Hence,
\begin{equation}
    \|\ket{y}\|^2
    =
    \|P_n\ket{y}\|^2
    +
    \|(\mathbb{I}-P_n)\ket{y}\|^2,
\end{equation}
which immediately gives Eq.~\eqref{eq:projector} and therefore
$\chi_F^{(n)}\leq\chi_F$.

Moreover, $\mathcal{V}_n\subseteq\mathcal{V}_{n+1}$, so
$P_{n+1}-P_n$ is the orthogonal projector onto
$\mathcal{V}_{n+1}\cap\mathcal{V}_n^\perp$. Consequently,
\begin{align}
    \chi_F^{(n+1)}-\chi_F^{(n)}
    &=
    \|P_{n+1}\ket{y}\|^2-\|P_n\ket{y}\|^2
    \nonumber\\
    &=
    \|(P_{n+1}-P_n)\ket{y}\|^2
    \geq0,
\end{align}
which proves the monotone hierarchy.
\end{proof}

For a fixed order $n\geq1$, define the Krylov vectors
\begin{equation}
    \ket{v_i}:=K^{i+1}\ket{\phi},
    \qquad i=0,\ldots,n-1,
\end{equation}
which span $\mathcal V_n$. Their Gram matrix and the associated moment
vector \cite{strakosGeneGolubGerard2011} are
\begin{equation}
    (\Gamma_n)_{ij}
    :=
    \braket{v_i|v_j}
    =
    M_{i+j+2},
    \qquad
    (m_n)_i:=M_i,
\end{equation}
so that explicitly
\begin{equation}
\Gamma_n=
\begin{pmatrix}
M_2 & M_3 & \cdots & M_{n+1}\\
M_3 & M_4 & \cdots & M_{n+2}\\
\vdots & \vdots & \ddots & \vdots\\
M_{n+1} & M_{n+2} & \cdots & M_{2n}
\end{pmatrix},
\qquad
m_n=
\begin{pmatrix}
M_0\\ M_1\\ \vdots\\ M_{n-1}
\end{pmatrix}.
\end{equation}

\begin{theorem}[Moment representation]
\label{thm:moment-krylov}
If $\Gamma_n$ is nonsingular, the order-$n$ Krylov approximation is
\begin{equation}
    \chi_F^{(n)}
    =
    m_n^\dagger\Gamma_n^{-1}m_n .
\label{eq:krylov-moment-reconstruction}
\end{equation}
Hence, $\chi_F^{(n)}$ is completely determined by the response moments
$M_0,\ldots,M_{2n}$.
\end{theorem}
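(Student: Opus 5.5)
We compute the projection $P_n\ket{y}$ explicitly using the nonorthogonal spanning set $\{\ket{v_i}\}_{i=0}^{n-1}$ of $\mathcal V_n$; Theorem~\ref{thm:krylov-lower-bound} plays no role beyond the definition $\chi_F^{(n)}=\|P_n\ket{y}\|^2$. Write $V_n=[\ket{v_0},\ldots,\ket{v_{n-1}}]$, so that $\Gamma_n=V_n^\dagger V_n$. When $\Gamma_n$ is nonsingular the $v_i$ are linearly independent, and the orthogonal projector onto their span is
\begin{equation}
P_n=V_n\Gamma_n^{-1}V_n^\dagger .
\end{equation}
One checks this directly: $P_n$ is Hermitian, $P_n^2=P_n$, and $P_nV_n=V_n$. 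Hence
\begin{equation}
\|P_n\ket{y}\|^2=\bra{y}P_n\ket{y}=(V_n^\dagger\ket{y})^\dagger\,\Gamma_n^{-1}\,(V_n^\dagger\ket{y}).
\end{equation}

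It therefore remains to show that $V_n^\dagger\ket{y}=m_n$, i.e.\ that $\braket{v_i|y}=M_i$. Since $K$ is Hermitian,
\begin{equation}
\braket{v_i|y}=\bra{\phi}K^{i+1}K^+\ket{\phi}.
\end{equation}
Using the spectral decomposition, $KK^+=K^+K=\mathbb I-\ket{\Psi_0}\bra{\Psi_0}=:Q$. Moreover $\braket{\Psi_0|\phi}=\bra{\Psi_0}B\ket{\Psi_0}=0$ by the definition of the connected perturbation $B$, so $Q\ket{\phi}=\ket{\phi}$. This gives
\begin{equation}
\bra{\phi}K^{i}Q\ket{\phi}=\bra{\phi}K^i\ket{\phi}=M_i .
\end{equation}
For $i=0$ this is just $\braket{\phi|Q\phi}=\braket{\phi|\phi}=M_0$. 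Each $M_i$ is real, since $K^i$ is Hermitian, so no conjugation issues arise and $m_n^\dagger=m_n^T$.

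The Gram matrix entries are immediate: $\braket{v_i|v_j}=\bra{\phi}K^{i+j+2}\ket{\phi}=M_{i+j+2}$. Combining the three displays yields Eq.~\eqref{eq:krylov-moment-reconstruction}. The indices that appear are $M_0,\ldots,M_{n-1}$ in $m_n$ and $M_2,\ldots,M_{2n}$ in $\Gamma_n$, which proves the final claim.

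The only delicate step is the identity $\braket{v_i|y}=M_i$. It relies on the Krylov vectors starting at $K\ket{\phi}$ rather than $\ket{\phi}$, so that a factor of $K$ is always available to cancel the pseudoinverse, and on the orthogonality $\ket{\phi}\perp\ket{\Psi_0}$, which removes the projector $Q$. I would state both facts explicitly. The projector formula is standard; I would include a one-line verification.
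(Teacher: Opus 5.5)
Your proof is correct and is essentially the paper's argument. The paper expands $P_n\ket{y}=\sum_j c_j\ket{v_j}$ and obtains the normal equations $\Gamma_n c=m_n$ from the condition $\braket{v_i|y-P_ny}=0$, which is exactly what your closed-form projector $P_n=V_n\Gamma_n^{-1}V_n^\dagger$ encodes; both proofs rest on the same identity $\braket{v_i|y}=M_i$, and you usefully justify it via $\braket{\Psi_0|\phi}=0$, whereas the paper only asserts $KK^+\ket{\phi}=\ket{\phi}$.
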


\begin{proof}
Expand the projected inverse-response vector as
\begin{equation}
    P_n\ket{y}
    =
    \sum_{j=0}^{n-1}c_j\ket{v_j}.
\end{equation}
The projection condition $\braket{v_i|y-P_ny}=0$ gives
\begin{equation}
    \sum_{j=0}^{n-1}
    \braket{v_i|v_j}c_j
    =
    \braket{v_i|y}.
\end{equation}
Since $\braket{v_i|y}
    =
    \bra{\phi}K^{i+1}K^+\ket{\phi}
    =
    M_i,$ where $KK^+\ket{\phi}=\ket{\phi}$, the coefficients satisfy
\begin{equation}
    \Gamma_n c=m_n.
\end{equation}
For nonsingular $\Gamma_n$, $c=\Gamma_n^{-1}m_n$. Therefore,
\begin{align}
    \chi_F^{(n)}
    &=
    \|P_n\ket{y}\|^2
    =
    c^\dagger\Gamma_n c
    \nonumber\\
    &=
    m_n^\dagger\Gamma_n^{-1}m_n,
\end{align}
which proves the result.
\end{proof}

For example, up to order-$2$ approximation, we have:
\begin{align}
    \chi_F^{(1)}
    &=
    \frac{M_0^2}{M_2}, \nonumber \\
      \chi_F^{(2)}
    &=
    \frac{
        M_0^2M_4-2M_0M_1M_3+M_1^2M_2
    }{
        M_2M_4-M_3^2
    }.
\end{align}

\subsection{Convergence analysis}

The convergence is controlled not by the full spectrum, but by the part of the spectrum probed by $\ket{\phi}$:
\begin{equation}
  \Delta_\phi
  :=
  \min_{\substack{j\geq1\\ \langle\Psi_j|\phi\rangle\neq0}}
  (E_j-E_0),
  \qquad
  \Lambda_\phi
  :=
  \max_{\substack{j\geq1\\ \langle\Psi_j|\phi\rangle\neq0}}
  (E_j-E_0).
\end{equation}
Since
$\Delta_H\leq\Delta_\phi\leq\Lambda_\phi\leq W_H$, the response condition number
can be substantially smaller than the Hamiltonian's condition number
\begin{equation}
  \kappa_\phi
  :=
  \frac{\Lambda_\phi}{\Delta_\phi}
  \leq
  \kappa_H
  :=
  \frac{W_H}{\Delta_H}.
\end{equation}
\begin{theorem}[Geometric convergence]
\label{thm:geometric-convergence-main}
For $\kappa_\phi > 1$, the Krylov hierarchy satisfies
\begin{equation}
  \frac{\chi_F-\chi_F^{(n)}}{\chi_F}
  \leq
  4\bigl(1+n\sqrt{\kappa_\phi}\bigr)^2
  \left(
    \frac{\sqrt{\kappa_\phi}-1}
         {\sqrt{\kappa_\phi}+1}
  \right)^{2n}.
\label{eq:geometric-convergence_1}
\end{equation}
\end{theorem}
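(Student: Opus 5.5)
Since $P_n\ket{y}$ is the best approximation to $\ket{y}$ in $\mathcal V_n$, Theorem~\ref{thm:krylov-lower-bound} gives, for every polynomial $q$ of degree at most $n-1$,
\begin{equation}
  \chi_F-\chi_F^{(n)}
  =\|(\mathbb I-P_n)\ket{y}\|^2
  \leq \|\ket{y}-Kq(K)\ket{\phi}\|^2 ,
\end{equation}
because $Kq(K)\ket{\phi}\in\mathcal V_n$. The plan is to pick $q$ so that $Kq(K)$ approximates $K^+$ on the spectral support of $\ket{\phi}$. Expand $\ket{\phi}=\sum_{j\ge1}\phi_j\ket{\Psi_j}$ with $\phi_j=\braket{\Psi_j|\phi}$; note $\phi_0=0$ since $B$ is connected. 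Write $\lambda_j=E_j-E_0\in[\Delta_\phi,\Lambda_\phi]$ on this support. Then
\begin{equation}
  \|\ket{y}-Kq(K)\ket{\phi}\|^2=\sum_j|\phi_j|^2\Big|\tfrac{1}{\lambda_j}-q(\lambda_j)\Big|^2
  =\sum_j\frac{|\phi_j|^2}{\lambda_j^2}\,|r(\lambda_j)|^2,
\end{equation}
where $r(\lambda):=1-\lambda q(\lambda)$ is a polynomial of degree $\le n$ with $r(0)=1$. Since $\chi_F=\sum_j|\phi_j|^2/\lambda_j^2$, we get
\begin{equation}
  \frac{\chi_F-\chi_F^{(n)}}{\chi_F}\le\max_{\lambda\in[\Delta_\phi,\Lambda_\phi]}|r(\lambda)|^2 .
\end{equation}
This reduces the theorem to a scalar polynomial approximation problem on $[\Delta_\phi,\Lambda_\phi]$.

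The natural choice for $r$ is the shifted, rescaled Chebyshev polynomial $r(\lambda)=T_n\big(\tfrac{\Lambda_\phi+\Delta_\phi-2\lambda}{\Lambda_\phi-\Delta_\phi}\big)/T_n\big(\tfrac{\kappa_\phi+1}{\kappa_\phi-1}\big)$. It gives the classical bound $\max|r|\le 2\rho^n$ with $\rho=(\sqrt{\kappa_\phi}-1)/(\sqrt{\kappa_\phi}+1)$, hence a relative error of at most $4\rho^{2n}$. This is already stronger than the stated estimate, because $(1+n\sqrt{\kappa_\phi})^2\ge1$. The hypothesis $\kappa_\phi>1$ guarantees a nondegenerate interval. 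If $\kappa_\phi=1$, the support is a single eigenvalue and the error is exactly zero at $n=1$.

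The step I expect to need the most care is verifying that the Chebyshev normalization at $\lambda=0$ gives $T_n\big(\tfrac{\kappa+1}{\kappa-1}\big)\ge\tfrac12\rho^{-n}$. I would check this via $T_n(x)=\tfrac12\big[(x+\sqrt{x^2-1})^n+(x-\sqrt{x^2-1})^n\big]$ together with $x+\sqrt{x^2-1}=(\sqrt\kappa+1)/(\sqrt\kappa-1)$ for $x=(\kappa+1)/(\kappa-1)$.

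The authors' prefactor $(1+n\sqrt{\kappa_\phi})^2$ suggests they instead used a derivative-type or Markov-inequality argument, perhaps bounding $q=(1-r)/\lambda$ directly. Since our bound implies theirs, I would present the Chebyshev argument and then note the weaker stated form as a consequence.
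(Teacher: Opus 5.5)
There is a genuine gap in your spectral computation. Since $Kq(K)\ket{\phi}=\sum_j\phi_j\lambda_j q(\lambda_j)\ket{\Psi_j}$, the residual is
\[
\|\ket{y}-Kq(K)\ket{\phi}\|^2=\sum_j\frac{|\phi_j|^2}{\lambda_j^2}\,\bigl|1-\lambda_j^2q(\lambda_j)\bigr|^2 .
\]
It is not $\sum_j|\phi_j|^2|\lambda_j^{-1}-q(\lambda_j)|^2$: you dropped a factor of $\lambda_j$. The admissible residual polynomials are therefore $r(\lambda)=1-\lambda^2q(\lambda)$. These have degree at most $n+1$ and satisfy both $r(0)=1$ \emph{and} $r'(0)=0$. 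The extra constraint arises because $\mathcal V_n$ is generated from $K\ket{\phi}$, so $\ket{\phi}$ itself is not available. Your formula is the correct one for approximating $\ket{y}$ from $\operatorname{span}\{\ket{\phi},K\ket{\phi},\ldots,K^{n-1}\ket{\phi}\}$, which is a different space.

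The plain normalized Chebyshev polynomial $T_n(w(\lambda))/T_n(z_\phi)$ has derivative $-2T_n'(z_\phi)/[(\Lambda_\phi-\Delta_\phi)T_n(z_\phi)]\neq0$ at the origin, so it is not admissible. Your ``stronger'' bound $4\rho^{2n}$ is in fact false. Take $n=1$ and two excitation energies $\lambda=1,2$ with $|\phi_j|^2=1$, so $\kappa_\phi=2$. Then $\chi_F=5/4$, $\chi_F^{(1)}=M_0^2/M_2=4/5$, and the relative error is $9/25=0.36$. By contrast, $4\rho^2=4(3-2\sqrt2)^2\approx0.118$. The stated bound gives $4(\sqrt2-1)^2\approx0.69$, which is consistent with the example.

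Your overall strategy is the right one and matches the paper: best approximation, then a polynomial residual, then a sup-norm bound on $[\Delta_\phi,\Lambda_\phi]$, then Chebyshev. What is missing is how to enforce $r'(0)=0$ at controlled cost. The paper multiplies the normalized Chebyshev polynomial by a linear factor, $q_n(x)=\varphi_n(x)(1+\alpha_nx)$ with $\alpha_n=-\varphi_n'(0)>0$, which lies in the admissible class. It then controls $\alpha_n$ through
\[
\frac{T_n'(z_\phi)}{T_n(z_\phi)}=\frac{n\tanh(n\theta_\phi)}{\sinh\theta_\phi}\le\frac{n(\kappa_\phi-1)}{2\sqrt{\kappa_\phi}} ,
\]
which gives $\alpha_n\Lambda_\phi\le n\sqrt{\kappa_\phi}$. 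Hence $\max_{[\Delta_\phi,\Lambda_\phi]}|q_n|\le(1+n\sqrt{\kappa_\phi})/\cosh(n\theta_\phi)$. So the prefactor $(1+n\sqrt{\kappa_\phi})^2$ is exactly the price of the derivative constraint, not an artifact of a Markov-type estimate. This matches the paper's remark that the static susceptibility, whose Krylov space starts at $\ket{\phi_O}$, converges without that prefactor.
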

The special case $\kappa_\phi=1$ corresponds to
$\chi_F^{(1)}=\chi_F$.
\begin{corollary}[Krylov-depth budget]
\label{cor:krylov-depth-main}
To guarantee
\begin{equation}
  \frac{\chi_F-\chi_F^{(n)}}{\chi_F}\leq\epsilon,
\end{equation}
it is sufficient to choose
\begin{equation}
  n
  =
  \mathcal{O}\!\left[
    \sqrt{\kappa_\phi}
    \left(
      \ln\kappa_\phi+\ln\epsilon^{-1}
    \right)
  \right]
  =
  \widetilde{\mathcal O}(\sqrt{\kappa_\phi}).
\end{equation}
\end{corollary}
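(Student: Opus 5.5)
The corollary follows from Theorem~\ref{thm:geometric-convergence-main} by inverting its bound. We choose $n$ so that the right-hand side of Eq.~\eqref{eq:geometric-convergence_1} is at most $\epsilon$. Two cases are trivial and set aside first. If $\kappa_\phi=1$, then $\chi_F^{(1)}=\chi_F$ (as remarked after the theorem), so $n=1$ suffices. If $\epsilon\geq1$, then Theorem~\ref{thm:krylov-lower-bound} gives $0\le(\chi_F-\chi_F^{(n)})/\chi_F\le 1$, so again $n=1$ suffices. Henceforth $\kappa_\phi>1$ and $0<\epsilon<1$. Write $s:=\sqrt{\kappa_\phi}$ and $q:=(s-1)/(s+1)$.

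\emph{Step 1 (geometric factor).} With $x=1/s\in(0,1)$ we have $q=(1-x)/(1+x)$. The series $\ln\frac{1+x}{1-x}=2\sum_{k\ \mathrm{odd}}x^k/k$ gives $\ln(1/q)\geq 2x=2/s$. Hence
\begin{equation}
q^{2n}\leq e^{-4n/s}.
\end{equation}
The sufficient condition from Theorem~\ref{thm:geometric-convergence-main} then reads
\begin{equation}
\ln 4+2\ln(1+ns)-\frac{4n}{s}\leq \ln\epsilon .
\end{equation}

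\emph{Step 2 (ansatz and absorbing the polynomial prefactor).} This is the only mildly delicate point. The prefactor $(1+ns)^2$ grows with $n$, so the logarithm of $n$ must be absorbed into the linear decay. Set $L:=\ln\kappa_\phi+\ln\epsilon^{-1}+1\geq1$ and take $n=\lceil C s L\rceil$ for an absolute constant $C\geq1$ to be fixed.

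First bound the prefactor. Since $n\le 2CsL$, we have $1+ns\le 3C\kappa_\phi L$. Using $\ln L\le L$ (valid as $L\ge1$) and $\ln\kappa_\phi\le L$, this gives
\begin{equation}
\ln(1+ns)\leq \ln(3C)+\ln\kappa_\phi+\ln L\leq \ln(3C)+2L .
\end{equation}
Next bound the decay term: $4n/s\geq 4CL$. The left-hand side of the condition in Step~1 is therefore at most
\begin{equation}
\ln4+2\ln(3C)+4L-4CL .
\end{equation}
We need this to be at most $\ln\epsilon=-\ln\epsilon^{-1}$. Since $\ln\epsilon^{-1}\le L$, it suffices that
\begin{equation}
(4C-5)L\geq \ln 4+2\ln(3C).
\end{equation}
Because $L\geq1$, it is enough that $4C-5\geq \ln4+2\ln(3C)$. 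This holds for a modest constant, e.g. $C=4$: then $11\ge \ln4+2\ln 12\approx 6.4$.

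\emph{Step 3 (conclusion).} With this choice,
\begin{equation}
n=\mathcal O\bigl[\sqrt{\kappa_\phi}\,(\ln\kappa_\phi+\ln\epsilon^{-1})\bigr],
\end{equation}
where the additive $+1$ in $L$ and the ceiling are absorbed into the $\mathcal O$. Theorem~\ref{thm:geometric-convergence-main} then yields $(\chi_F-\chi_F^{(n)})/\chi_F\le\epsilon$. Finally, Theorem~\ref{thm:krylov-lower-bound} shows the error is nonincreasing in $n$, so any larger $n$ also works. For fixed $\epsilon$ this is $\widetilde{\mathcal O}(\sqrt{\kappa_\phi})$.
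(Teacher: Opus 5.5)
Your proof is correct, and its skeleton matches the paper's: both invert the bound of Theorem~\ref{thm:geometric-convergence-main} using the same estimate $\theta_\phi=\ln\frac{\sqrt{\kappa_\phi}+1}{\sqrt{\kappa_\phi}-1}\geq 2/\sqrt{\kappa_\phi}$. Your $\ln(1/q)\geq 2/s$ is this estimate; the paper phrases it as $\operatorname{artanh}u\geq u$.

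The difference lies in how the growing prefactor $(1+n\sqrt{\kappa_\phi})^2$ is absorbed.
\begin{itemize}
\item The paper bounds $1+n\sqrt{\kappa_\phi}\leq 2n\sqrt{\kappa_\phi}$ and splits $n^2e^{-2n\theta_\phi}=(ne^{-n\theta_\phi/2})^2e^{-n\theta_\phi}$. It then uses $\max_{t>0}te^{-t\theta_\phi/2}=2/(e\theta_\phi)$ to reach the $n$-independent bound $\frac{16\kappa_\phi^2}{e^2}e^{-n\theta_\phi}$.
\item Because this bound decreases in $n$, it inverts directly to the explicit threshold $n_\epsilon=\bigl\lceil\frac{\sqrt{\kappa_\phi}}{2}\ln\bigl(16\kappa_\phi^2/(e^2\epsilon)\bigr)\bigr\rceil$. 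It holds for every $n\geq n_\epsilon$ without appealing to Theorem~\ref{thm:krylov-lower-bound}.
\item Your ansatz-and-verify route takes $n=\lceil CsL\rceil$, uses the crude bounds $\ln L\leq L$ and $\ln\kappa_\phi\leq L$, and fixes $C=4$. It is equally rigorous but looser.
\end{itemize}

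Concretely, you get roughly $4\sqrt{\kappa_\phi}\,(\ln\kappa_\phi+\ln\epsilon^{-1}+1)$, versus the paper's $\approx\sqrt{\kappa_\phi}\ln\kappa_\phi+\frac{\sqrt{\kappa_\phi}}{2}\ln\epsilon^{-1}$. You also need the monotonicity from Theorem~\ref{thm:krylov-lower-bound} to cover larger $n$. On the other hand, you treat the edge cases $\kappa_\phi=1$ and $\epsilon\geq1$ explicitly, which the paper simply excludes by assumption.

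Finally, the additive constant you absorb into the $\mathcal O$ (your $+1$ in $L$) has a counterpart in the paper's threshold, namely $\ln(16/e^2)$. So the scaling statement $\mathcal O\bigl[\sqrt{\kappa_\phi}(\ln\kappa_\phi+\ln\epsilon^{-1})\bigr]$ is asymptotic in both versions, not a defect specific to your argument.
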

The proof of Theorem \ref{thm:geometric-convergence-main} and Corollary \ref{cor:krylov-depth-main} is given in Appendix~\ref{sec:convergence}. These bounds also provide the budget for moment order, since the order-$n$ reconstruction requires moments up to $M_{2n}$. As this is a worst-case bound, for systems with spectrally sparse response, it can converge at low Krylov order.

\section{Classical-Shadow Certification}
\label{sec:classical-shadows}

We saw in the previous section, that the order-$n$ Krylov reconstruction requires the moments
$M_0,\ldots,M_{2n}$. We assume that $E_0$ and $\langle H_I\rangle$ are known aprior.
The required moments can be written as expectation values
\begin{equation}
    M_k
    =
    \Tr(\rho_0 O_k)
    =
    \bra{\phi}K^k\ket{\phi},
    \qquad
    O_k:=BK^kB .
    \label{eq:sh-moments-as-observables}
\end{equation}
Since each $M_k$ is a linear functional of the reference ground-state $\rho_0$, it can be estimated via classical shadow data set.

For numerical stability, we normalize using a scale
$\Omega\geq\|K\|_\infty$ and define
\begin{equation}
    \widetilde K:=\frac{K}{\Omega},
    \qquad
    \widetilde O_k:=B\widetilde K^kB,
    \qquad
    \widetilde M_k:=\Tr(\rho_0\widetilde O_k).
\end{equation}
Then $\|\widetilde O_k\|_\infty
    \leq
    \|B\|_\infty^2 ,$ independently of $k$. Since $\widetilde K^+=\Omega K^+$,
\begin{equation}
    \chi_F
    =
    \Omega^{-2}\widetilde\chi_F .
    \label{eq:cs-restore-units}
\end{equation}

\subsection{Simultaneous Shadow Estimation}

For our purpose, we can simply choose independent random single-qubit Pauli measurements. The details are reviewed in
Appendix~\ref{sec:classical_shadows}. For each snapshot
$\widehat\rho^{(t)}$, the estimator is unbiased
\begin{equation}
    X_k^{(t)}
    :=
    \Tr\!\left(
        \widetilde O_k\widehat\rho^{(t)}
    \right), \qquad \mathbb E[X_k^{(t)}]
    =
    \widetilde M_k .
\end{equation}
Importantly, the same set of snapshots can be reused for all
$2n+1$ moments. Suppose we have bounds 
\begin{equation}
  \Var(X_k^{(t)})\leq\nu_k, \qquad   \nu_{\max}^{(n)}
    :=
    \max_{0\leq k\leq2n}\nu_k .
\end{equation}
then, the classical shadows protocol \cite{Huang_2020} implies there exists a universal constant $C$ such that the number of shots
\begin{equation}
    N_{\rm sh}
    \geq
    C\,
    \frac{\nu_{\max}^{(n)}}{\eta^2}
    \log\!\frac{2(2n+1)}{\delta}
    \label{eq:sh-shot-count}
\end{equation}
suffices to guarantee
\begin{equation}
    \left|
        \widehat{\widetilde M}_k-\widetilde M_k
    \right|
    \leq\eta,
    \qquad
    0\leq k\leq2n,
    \label{eq:sh-uniform-moment-error}
\end{equation}
simultaneously with probability at least $1-\delta$. Thus, the dependence on the number of required moments is only
logarithmic.

\subsection{Certified Reconstruction}
Since the true fidelity susceptibility is unknown, we cannot work with the relative accuracy used in Theorem \ref{thm:geometric-convergence-main}. Rather, we work with additive moment accuracy. From the estimated normalized moments, we define
\begin{equation}
    (\widehat{\widetilde m}_n)_i
    :=
    \widehat{\widetilde M}_i,
    \qquad
    (\widehat{\widetilde\Gamma}_n)_{ij}
    :=
    \widehat{\widetilde M}_{i+j+2},
    \label{eq:sh-estimated-hankel}
\end{equation}
and
\begin{equation}
    \widehat{\widetilde\chi}_F^{(n)}
    :=
    \widehat{\widetilde m}_n^\dagger
    \widehat{\widetilde\Gamma}_n^{-1}
    \widehat{\widetilde m}_n .
\end{equation}

\begin{theorem}[Certified fidelity-susceptibility bound]
\label{thm:certified-shadow-bound}
Assume Eq.~\eqref{eq:sh-uniform-moment-error} holds and define $\widehat\gamma_n
    :=
    \lambda_{\min}
    (\widehat{\widetilde\Gamma}_n)$
For every order satisfying $\widehat\gamma_n>n\eta,$ the fidelity susceptibility obeys, with probability at least $1-\delta$,
\begin{equation}
    \chi_F
    \geq
    \mathcal B(n)
    :=
    \Omega^{-2}
    \left[
        \widehat{\widetilde\chi}_F^{(n)}
        -
        L_n\eta
        -
        R_n\eta^2
    \right].
\label{eq:cs-main-certified-bound}
\end{equation}
\end{theorem}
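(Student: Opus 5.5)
The plan is a perturbation argument for the quadratic form $f(m,\Gamma)=m^\dagger\Gamma^{-1}m$, chained with Theorem~\ref{thm:krylov-lower-bound}. On the event of Eq.~\eqref{eq:sh-uniform-moment-error}, which has probability at least $1-\delta$, the exact normalized quantities satisfy $\widetilde\chi_F\geq\widetilde\chi_F^{(n)}$. The reason is that $\widetilde K$ has the same eigenvectors as $K$, so the Krylov spaces $\mathcal V_n$ coincide, and Theorems~\ref{thm:krylov-lower-bound} and \ref{thm:moment-krylov} apply verbatim to the rescaled problem. It therefore suffices to show the deterministic inequality
\begin{equation}
\widetilde\chi_F^{(n)}\;\geq\;\widehat{\widetilde\chi}_F^{(n)}-L_n\eta-R_n\eta^2 ,
\end{equation}
and then multiply by $\Omega^{-2}$ using Eq.~\eqref{eq:cs-restore-units}.

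\textbf{Step 1: controlling the perturbations.} Write $\widetilde m_n=\widehat{\widetilde m}_n-e$ and $\widetilde\Gamma_n=\widehat{\widetilde\Gamma}_n-E$. Each entry of $e$ and of the Hankel matrix $E$ is bounded by $\eta$. This gives $\|e\|_2\leq\sqrt n\,\eta$ and $\|E\|_2\leq\|E\|_F\leq n\eta$. The hypothesis $\widehat\gamma_n>n\eta$ and Weyl's inequality then give $\lambda_{\min}(\widetilde\Gamma_n)\geq\widehat\gamma_n-n\eta>0$. So $\widetilde\Gamma_n$ is nonsingular, Theorem~\ref{thm:moment-krylov} applies, and both inverses are bounded: $\|\widetilde\Gamma_n^{-1}\|\leq(\widehat\gamma_n-n\eta)^{-1}$ and $\|\widehat{\widetilde\Gamma}_n^{-1}\|\leq\widehat\gamma_n^{-1}$.

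\textbf{Step 2: expanding the difference.} Set $\hat m=\widehat{\widetilde m}_n$, $\hat\Gamma=\widehat{\widetilde\Gamma}_n$ and $\Gamma=\widetilde\Gamma_n$. Then
\begin{equation}
\hat m^\dagger\hat\Gamma^{-1}\hat m-\widetilde m_n^\dagger\Gamma^{-1}\widetilde m_n
= \hat m^\dagger(\hat\Gamma^{-1}-\Gamma^{-1})\hat m + 2\,\mathrm{Re}\,(e^\dagger\Gamma^{-1}\hat m) - e^\dagger\Gamma^{-1}e .
\end{equation}
For the first term I use the resolvent identity $\hat\Gamma^{-1}-\Gamma^{-1}=-\hat\Gamma^{-1}E\Gamma^{-1}$. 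Its magnitude is then at most $\|\hat m\|^2 n\eta/[\widehat\gamma_n(\widehat\gamma_n-n\eta)]$. The second term is at most $2\sqrt n\,\eta\,\|\hat m\|/(\widehat\gamma_n-n\eta)$. The third term is nonnegative when subtracted and at most $n\eta^2/(\widehat\gamma_n-n\eta)$. Collecting the terms linear and quadratic in $\eta$ gives explicit, data-dependent constants:
\begin{equation}
L_n=\frac{n\|\hat m\|^2}{\widehat\gamma_n(\widehat\gamma_n-n\eta)}+\frac{2\sqrt n\|\hat m\|}{\widehat\gamma_n-n\eta},\qquad R_n=\frac{n}{\widehat\gamma_n-n\eta}.
\end{equation}
Both are computable from the estimated moments alone, which is exactly what a certificate requires. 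Alternatively, one can keep $\hat\Gamma^{-1}$ on both sides to get a cleaner split; the exact form of $L_n$ and $R_n$ is not essential.

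\textbf{Main obstacle.} The delicate point is that every constant must be expressed through \emph{estimated} quantities. The exact $\lambda_{\min}(\Gamma)$ and $\widetilde m_n$ are unknown, so I will never invoke them directly. Instead I use the bound of Step~1 and $\|\hat m\|$, both of which are observable; the Weyl-based gap condition $\widehat\gamma_n>n\eta$ is what makes this possible. A secondary check is that using the Frobenius bound $\|E\|_2\leq n\eta$ is legitimate for Hankel perturbations with entrywise error $\eta$; it is, since $E$ is $n\times n$. Finally, all statements hold on the single high-probability event of Eq.~\eqref{eq:sh-uniform-moment-error}, so the result holds simultaneously for all admissible $n$ with probability at least $1-\delta$.
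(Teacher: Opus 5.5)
Your argument is correct as a proof of a certified bound of this shape, but it uses a different decomposition from the paper's. As a result it certifies the bound with your own constants, not the ones the paper attaches to this theorem (defined right after it). Those are $L_n=2\sqrt n\,\overline c_n+n\overline c_n^{\,2}$ and $R_n=(\sqrt n+n\overline c_n)^2/\widehat\gamma_n$, with $\overline c_n=(\|\widehat c_n\|_2+\sqrt n\,\eta/\widehat\gamma_n)/(1-n\eta/\widehat\gamma_n)$ and $\widehat c_n=\hat\Gamma^{-1}\hat m$.

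\textbf{The paper's route.} It expands around the \emph{unknown} exact coefficient vector $c_n=\Gamma^{-1}\widetilde m_n$. Writing $\hat m=\hat\Gamma c_n+r_n$ with $r_n=e-Ec_n$ gives the exact identity $\widehat{\widetilde\chi}_F^{(n)}-\widetilde\chi_F^{(n)}=2\,\mathrm{Re}(c_n^\dagger e)-c_n^\dagger Ec_n+r_n^\dagger\hat\Gamma^{-1}r_n$. Its remainder is nonnegative and must be bounded. The paper then needs a self-consistent step, $\hat\Gamma(\widehat c_n-c_n)=r_n$, to replace the unobservable $\|c_n\|_2$ by $\overline c_n$.

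\textbf{Your route.} You expand around the observed $\hat m$ via the resolvent identity, so no fixed-point step is needed. Your remainder $-e^\dagger\Gamma^{-1}e$ also has the favourable sign; since $\Gamma\succ0$ you may simply drop it, i.e.\ take $R_n=0$. The price is the crude estimate $\|\hat\Gamma^{-1}\hat m\|_2\le\|\hat m\|_2/\widehat\gamma_n$. As $\eta\to0$ your linear coefficient tends to $2\sqrt n\,a+na^2$ with $a=\|\hat m\|_2/\widehat\gamma_n$. The paper's tends to $2\sqrt n\,b+nb^2$ with $b=\|\widehat c_n\|_2\le a$, and for ill-conditioned Hankel matrices $b$ can be far below $a$. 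Conversely, for $n=1$ (where $a=b$) both of your constants are smaller than the paper's. So neither certificate implies the other. Strictly, you prove a valid sibling of the theorem rather than the inequality with the paper's $L_n,R_n$.

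\textbf{Recovering the paper's constants.} You can recover, and in fact improve on, the paper's constants along your own route. Use $\hat m^\dagger\hat\Gamma^{-1}=\widehat c_n^{\,\dagger}$ together with $\Gamma^{-1}\hat m=\widehat c_n+\Gamma^{-1}E\widehat c_n$, so that $\|\Gamma^{-1}\hat m\|_2\le\|\widehat c_n\|_2/(1-n\eta/\widehat\gamma_n)$. Bounding your first two terms this way, and dropping the third, yields
\begin{equation*}
\widehat{\widetilde\chi}_F^{(n)}-\widetilde\chi_F^{(n)}\le\frac{2\sqrt n\,\|\widehat c_n\|_2+n\|\widehat c_n\|_2^2}{1-n\eta/\widehat\gamma_n}\,\eta .
\end{equation*}
Since $\overline c_n\ge\|\widehat c_n\|_2/(1-n\eta/\widehat\gamma_n)$, this right-hand side is at most the paper's $L_n\eta+R_n\eta^2$. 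Chaining with $\widetilde\chi_F\ge\widetilde\chi_F^{(n)}$ and $\chi_F=\Omega^{-2}\widetilde\chi_F$, exactly as you planned, then gives the theorem with the paper's constants.
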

The proof is in Appendix~\ref{sec:classical_shadows}. The parameters entering the bound are all shadow data-dependent quantities
\begin{align}
    \widehat c_n
    &:=
    \widehat{\widetilde\Gamma}_n^{-1}
    \widehat{\widetilde m}_n,
    \\
    \overline c_n
    &:=
    \frac{
        \|\widehat c_n\|_2
        +
        \sqrt n\,\eta/\widehat\gamma_n
    }{
        1-n\eta/\widehat\gamma_n
    },
    \\
    L_n
    &:=
    2\sqrt n\,\overline c_n
    +
    n\overline c_n^2,
    \\
    R_n
    &:=
    \frac{
        (\sqrt n+n\overline c_n)^2
    }{
        \widehat\gamma_n
    }.
\end{align}
The terms $L_n\eta$ and $R_n\eta^2$ are finite-sample reconstruction
penalties arising from errors in the estimated moment vector and Hankel
matrix. Among all admissible orders, the tightest certified bound is
\begin{equation}
    \mathcal B^\star
    :=
    \max_{n\in\mathcal A}\mathcal B(n),
    \qquad
    \mathcal A
    :=
    \{n:\widehat\gamma_n>n\eta\}.
\end{equation}
The whole algorithm is outlined in Algorithm~\ref{alg:shadow-fidelity}. 

\subsection{Imperfect Ground-State Preparation}

So far, we have assumed the preparation of a perfect ground state
$\rho_0=\ket{\Psi_0}\!\bra{\Psi_0}$. Suppose instead that
\begin{equation}
    \frac12
    \|\rho_{\rm p}-\rho_0\|_1
    \leq
    \epsilon_{\rm p}.
\end{equation}
Then
\begin{equation}
    \left|
        \Tr[
            (\rho_{\rm p}-\rho_0)\widetilde O_k
        ]
    \right|
    \leq
    2\epsilon_{\rm p}\|B\|_\infty^2 .
\end{equation}
Therefore, the same certification applies after replacing
\begin{equation}
    \eta
    \longrightarrow
    \eta_{\rm tot}
    :=
    \eta
    +
    2\epsilon_{\rm p}\|B\|_\infty^2.
\end{equation}

\begin{algorithm}[t]
\caption{Certified shadow estimation of $\chi_F$}
\label{alg:shadow-fidelity}
\DontPrintSemicolon
\KwIn{$H$, $H_I$, $E_0$, $\langle H_I\rangle$, scale
$\Omega\geq\|H-E_0\|_\infty$, maximum order $n_{\max}$,
moment accuracy $\eta$, confidence $1-\delta$.}
\KwOut{Certified lower bound $\mathcal B^\star$ on $\chi_F$.}

Set
$B\leftarrow H_I-\langle H_I\rangle I$ and
$\widetilde K\leftarrow(H-E_0I)/\Omega$\;

Construct $\widetilde O_k=B\widetilde K^kB$ for
$k=0,\ldots,2n_{\max}$\;

Choose $N_{\rm sh}$ according to
Eq.~\eqref{eq:sh-shot-count}\;

Collect $N_{\rm sh}$ local-Pauli classical-shadow snapshots\;

Estimate
$\widehat{\widetilde M}_0,\ldots,
\widehat{\widetilde M}_{2n_{\max}}$
simultaneously\;

$\mathcal A\leftarrow\varnothing$\;

\For{$n=1,\ldots,n_{\max}$}{
    Assemble
    $\widehat{\widetilde m}_n$ and
    $\widehat{\widetilde\Gamma}_n$\;

    Compute
    $\widehat\gamma_n
    =
    \lambda_{\min}(\widehat{\widetilde\Gamma}_n)$\;

    \If{$\widehat\gamma_n>n\eta$}{
        Solve
        $\widehat{\widetilde\Gamma}_n\widehat c_n
        =
        \widehat{\widetilde m}_n$\;

        Compute
        $\widehat{\widetilde\chi}_F^{(n)}
        =
        \widehat{\widetilde m}_n^\dagger\widehat c_n$\;

        Evaluate $\mathcal B(n)$ from
        Eq.~\eqref{eq:cs-main-certified-bound}\;

        $\mathcal A\leftarrow\mathcal A\cup\{n\}$\;
    }
}

\Return
$\mathcal B^\star
=
\max_{n\in\mathcal A}\mathcal B(n)$\;
\end{algorithm}

\section{Applications}
\label{subsec:entanglement_verification}
We will now present two applications of our certified bound protocol. First, in quantum metrology, and second, by extending the construction to linear static susceptibilities.

\subsection{Quantum Fisher Information}
For a pure ground state generated by the parameter variation
$H_I=\partial_\lambda H$, the quantum Fisher information (QFI) satisfies $ F_\lambda = 4\chi_F$ \cite{Liu_2019,Giovannetti_2006}.  Therefore, the certified fidelity-susceptibility bound immediately gives
\begin{equation}
    F_\lambda \geq 4\mathcal{B}^\star ,
\end{equation}
providing a statistically certified lower bound on the QFI.

\subsection{Extension to Linear Static Susceptibility}
\label{subsec:static_susceptibility}

Our construction can readily be extended to linear static
susceptibilities, which quantify the response of a system to a static external perturbation
\cite{Kubo:1957mj, Stanley1972IntroductionTP, RevModPhys.70.653}.
Consider $ H(h)=H-hO ,$ where $O$ is an observable coupled to an external field $h$. For a
nondegenerate ground state $\ket{\Psi_0}$, the zero-temperature static susceptibility $ \chi_O
    :=
    \left.
    \frac{\partial \langle O\rangle_h}{\partial h}
    \right|_{h=0}$
has the spectral representation
\begin{equation}
    \chi_O
    =
    2\sum_{j>0}
    \frac{
        \left|
            \bra{\Psi_j}O\ket{\Psi_0}
        \right|^2
    }{
        E_j-E_0
    }.
    \label{eq:static_susceptibility_spectral}
\end{equation}
Defining $B_O:=O-\langle O\rangle I,$ and $\ket{\phi_O}:=B_O\ket{\Psi_0}$, and $K=H-E_0I$, Eq.~\eqref{eq:static_susceptibility_spectral} becomes
\begin{equation}
    \chi_O
    =
    2\bra{\phi_O}K^+\ket{\phi_O}.
\end{equation}
Hence, the susceptibility is determined by the negative first moment, rather than the negative second moment appearing in the fidelity susceptibility.

The corresponding order-$n$ Krylov construction uses
\begin{equation}
    \mathcal{V}_n^{(O)}
    =
    \operatorname{span}
    \left\{
        \ket{\phi_O},
        K\ket{\phi_O},
        \ldots,
        K^{n-1}\ket{\phi_O}
    \right\},
\end{equation}
with moments $   M_k^{(O)}
    :=
    \bra{\phi_O}K^k\ket{\phi_O}$. The associated Hankel matrix is therefore shifted by one power,
\begin{equation}
    \left(\Gamma_n^{(O)}\right)_{ij}
    =
    M_{i+j+1}^{(O)},
    \qquad i,j=0,\ldots,n-1,
\end{equation}
and the similar variational construction gives a monotone lower bound
hierarchy for $\chi_O$, with an overall factor of two. In addition, the standard Krylov convergence applies without the polynomial prefactor appearing in
Eq.~\eqref{eq:geometric-convergence_1}
\cite{10.1093/acprof:oso/9780199655410.001.0001}.

\section{Conclusion}

By integrating Krylov subspace methods with resolvent-based reformulation of fidelity susceptibility, we have developed a framework for estimating and certifying the fidelity susceptibility from randomized measurements of a ground state. This produces a hierarchy
\begin{equation}
0\leq \chi_F^{(1)} \leq \chi_F^{(2)} \leq \cdots \leq \chi_F ,
\end{equation}
whose order-$n$ is determined by the moments $M_0,\ldots,M_{2n}$. These moments are linear expectation values of the same ground-state and can therefore be estimated simultaneously using randomized single-copy measurements. This eliminates the need for an excited-state spectrum, which would not be practical for the classical shadows protocol. We illustrate the mechanics for the transverse-field Ising model. Beyond fidelity susceptibility itself, we also provided how the certified lower bound has direct operational applications in metrology and related quantities.

Several interesting future directions remain. In this work, we have restricted ourselves to $f(x) = 1/x^2$ for the fidelity susceptibility and $1/x$ for the linear static susceptibility. Krylov subspace methods can be applied to a general $f(x)$, thereby allowing a large class of observables to be estimated from classical shadows. Combining this with Krylov shadow tomography \cite{zhangKrylovShadowTomography2025a} would make the framework far more general. More technical work would focus on developing adaptive criteria for choosing the Krylov order from experimental data and extending the framework to finite-temperature and mixed-state susceptibilities. 

The next crucial question is how to control sample complexity, as it depends on the ensembles selected for randomized measurements. If the ensemble is random single-qubit Pauli measurements (as done in this work), the sample complexity depends on the locality of the observable $X_k^{(t)}$, which typically increases as $k$ increases. If, instead, the ensemble consists of Random $n$-qubit Clifford circuits, then the dependence on locality drops. So, we have a trade-off, and proper analysis is left for future work. Finally, if the goal is just to compute the Fidelity susceptibility rather than arbitrary observables, de-randomization techniques are recommended \cite{Huang_2021}, as they can substantially reduce sample complexity.

\section*{Acknowledgements}
 The research is part of the Munich Quantum Valley, which is supported by the Bavarian state government with funds from the Hightech Agenda Bayern Plus. During the preparation of this article, the author used ChatGPT (OpenAI) and Claude (Anthropic) to assist with mathematical cross-checking, literature discovery, and language editing. All mathematical statements, derivations, references, and conclusions were independently verified by the author, who takes full responsibility for the content.

\bibliographystyle{apsrev4-2}
\bibliography{bib.bib}

\appendix
\onecolumngrid


\section{Fidelity susceptibility in Classical Shadow Friendly form}

\subsection{Resolvent form}

\label{sec:spectral_representation}

We now derive the two equivalent expressions for fidelity susceptibility $\chi_F$. Differentiating the instantaneous eigenvalue equation $H(\lambda)|\Psi_0(\lambda)\rangle
=
E_0(\lambda)|\Psi_0(\lambda)\rangle$ with respect to $\lambda$ gives
\begin{align}
\partial_\lambda H|\Psi_0\rangle
+
H|\partial_\lambda\Psi_0\rangle
&=
\partial_\lambda E_0|\Psi_0\rangle
+
E_0|\partial_\lambda\Psi_0\rangle \nonumber \\
H_I|\Psi_0\rangle
+
H|\partial_\lambda\Psi_0\rangle
&=
(\partial_\lambda E_0)|\Psi_0\rangle
+
E_0|\partial_\lambda\Psi_0\rangle .
\label{eq:differentiated_ground_state_equation_HI}
\end{align}
where, we used $H_I=\partial_\lambda H$. Projecting Eq.~\eqref{eq:differentiated_ground_state_equation_HI} onto an excited eigenstate $\langle \Psi_j|$ with $j\neq 0$, we obtain
\begin{align}
\langle\Psi_j|H_I|\Psi_0\rangle
+
\langle\Psi_j|H|\partial_\lambda\Psi_0\rangle
&=
(\partial_\lambda E_0)\langle\Psi_j|\Psi_0\rangle
+
E_0\langle\Psi_j|\partial_\lambda\Psi_0\rangle \nonumber \\
\langle\Psi_j|H_I|\Psi_0\rangle
+
E_j\langle\Psi_j|\partial_\lambda\Psi_0\rangle
&=
E_0\langle\Psi_j|\partial_\lambda\Psi_0\rangle .
\label{eq:projection_excited_state_2}
\end{align}
where, we used $\langle\Psi_j|\Psi_0\rangle=0$ for $j\neq 0$ and $\langle\Psi_j|H=E_j\langle\Psi_j|$ in the second line. Therefore,
\begin{equation}
\langle\Psi_j|\partial_\lambda\Psi_0\rangle
=
-\frac{\langle\Psi_j|H_I|\Psi_0\rangle}{E_j-E_0},
\qquad
j\neq 0.
\label{eq:derivative_component}
\end{equation}

We define the ground-state projector and its orthogonal complement by $P
= |\Psi_0\rangle\langle\Psi_0|,$ and $P^\perp = I-P = \sum_{j=1}^{D-1}
|\Psi_j\rangle\langle\Psi_j|$ respectively. The fidelity susceptibility is the squared norm of the component of $|\partial_\lambda\Psi_0\rangle$ orthogonal to the ground state:
\begin{align}
\chi_F
&= \langle\partial_\lambda\Psi_0|P^\perp|\partial_\lambda\Psi_0\rangle \nonumber \\
&= \langle\partial_\lambda\Psi_0|
\left(
\sum_{j=1}^{D-1}
|\Psi_j\rangle\langle\Psi_j|
\right)
|\partial_\lambda\Psi_0\rangle
\nonumber\\
&=
\sum_{j=1}^{D-1}
\langle\partial_\lambda\Psi_0|\Psi_j\rangle
\langle\Psi_j|\partial_\lambda\Psi_0\rangle
\nonumber\\
&=
\sum_{j=1}^{D-1}
\left|
\langle\Psi_j|\partial_\lambda\Psi_0\rangle
\right|^2 \nonumber \\
&= \sum_{j=1}^{D-1}
\frac{
|\langle\Psi_j|H_I|\Psi_0\rangle|^2
}{
(E_j-E_0)^2
}.
\label{eq:chi_derivative_expansion}
\end{align}

\subsection{Reduction of Fidelity Susceptibility to a Classical Moment-Reconstruction Problem}
\label{subsec:fs_moment_reduction}

Let $ B
    :=
    H_I
    -
    \bra{\Psi_0}H_I\ket{\Psi_0}I,$ and $\ket{\phi}
    :=
    B\ket{\Psi_0}$. Then, Eq.~\eqref{eq:chi_derivative_expansion} can be written as
\begin{align}
    \chi_F
    &=
    \bra{\Psi_0}
    B\left(K^{+}\right)^2B
    \ket{\Psi_0}
    \nonumber\\
    &=
    \bra{\phi}
    \left(K^{+}\right)^2
    \ket{\phi}.
    \label{eq:fs_pseudoinverse_form}
\end{align}
where, $K:=H-E_0 I$. Thus, fidelity susceptibility is the expectation value of the non-polynomial spectral function $f(x)=x^{-2}$ on the excited-state support of $K$. We now define the positive spectral measure
\begin{equation}
    d\mu(E)
    :=
    \sum_{j\neq0}
    \left|
        \bra{\Psi_j}H_I\ket{\Psi_0}
    \right|^2
    \delta\!\left(
        E-\left(E_j-E_0\right)
    \right)dE.
    \label{eq:response_spectral_measure}
\end{equation}
with support $\operatorname{supp}(\mu)
    \subseteq
    [\Delta,\Lambda]$, where $\Delta
    :=
    E_1-E_0
    >
    0$ is the spectral gap and $\Lambda
    :=
    E_{\max}-E_0$ is the maximum excitation energy. For every $k\in\mathbb{Z}$, we define the response moment $M_k$ as
\begin{align}
    M_k
    &=\bra{\phi}K^k\ket{\phi} \nonumber \\
    &=  \bra{\Psi_0}BK^kB\ket{\Psi_0} \nonumber \\
   &=  \sum_{i,j\neq0}
    \bra{\Psi_0}H_I\ket{\Psi_i}
    \bra{\Psi_i}K^k\ket{\Psi_j}
    \bra{\Psi_j}H_I\ket{\Psi_0}
    \nonumber\\
    &=
    \sum_{j\neq0}
    \left|
        \bra{\Psi_j}H_I\ket{\Psi_0}
    \right|^2
    \left(E_j-E_0\right)^k.
    \label{eq:response_moment_spectral}
\end{align}

Multiplying spectral measure Eq.~\eqref{eq:response_spectral_measure} by $E^k$ and integrating between $\Delta$ and $\Lambda$ gives the moment $M_k$ as:
\begin{align}
   \int_\Delta^\Lambda E^k \, d\mu(E) &=   \int_\Delta^\Lambda E^k \sum_{j\neq 0} |\langle \Psi_j | H_I | \Psi_0 \rangle|^2 \,
   \delta\!\big(E - (E_j - E_0)\big)\, dE \nonumber \\
   &= \sum_{j\neq 0} |\langle \Psi_j | H_I | \Psi_0 \rangle|^2 \, (E_j - E_0)^k \nonumber \\
   &= M_k
\end{align}
where we used the shift property of delta function $\int g(E) \, \delta(E - a) \, dE = g(a)$ in the second line.

Similarly, the fidelity susceptibility  Eq.~\eqref{eq:chi_derivative_expansion} is just the negative second moment $M_{-2}$:
\begin{align}
    \chi_F
    &=\sum_{j=1}^{D-1}
\frac{
|\langle\Psi_j|H_I|\Psi_0\rangle|^2
}{
(E_j-E_0)^2
} \nonumber \\
    &= \sum_{j\neq0}
    \left|
        \bra{\Psi_j}H_I\ket{\Psi_0}
    \right|^2
    \left(E_j-E_0\right)^{-2}
    \nonumber\\
    &=
    \int_{\Delta}^{\Lambda}
    E^{-2}\,d\mu(E) \nonumber \\
    &= M_{-2}
    \label{eq:fs_negative_moment}
\end{align}
 Therefore, the central problem is the following moment-reconstruction task:

\begin{equation}
    \left\{
        M_k
        =
        \int_{\Delta}^{\Lambda}
        E^k\,d\mu(E)
    \right\}_{k=0}^{k_{\max}}
    \quad
    \longrightarrow
    \quad
    \chi_F
    =
    \int_{\Delta}^{\Lambda}
    E^{-2}\,d\mu(E).
\end{equation}
In particular, if $p_n(E)$ is a degree-$n$ polynomial approximating $E^{-2}$ on $[\Delta,\Lambda]$, $  p_n(E)
    =
    \sum_{k=0}^{n}c_kE^k,$ then
\begin{align}
    \chi_F
    &=
    \int_{\Delta}^{\Lambda}
    E^{-2}\,d\mu(E)
    \nonumber\\
    &\approx
    \int_{\Delta}^{\Lambda}
    p_n(E)\,d\mu(E)
    \nonumber\\
    &=
    \sum_{k=0}^{n}
    c_k
    \int_{\Delta}^{\Lambda}
    E^k\,d\mu(E)
    \nonumber\\
    &=
    \sum_{k=0}^{n}c_kM_k.
    \label{eq:fs_polynomial_moment_estimator}
\end{align}

\begin{figure}
    \centering
    \includegraphics[width=\linewidth]{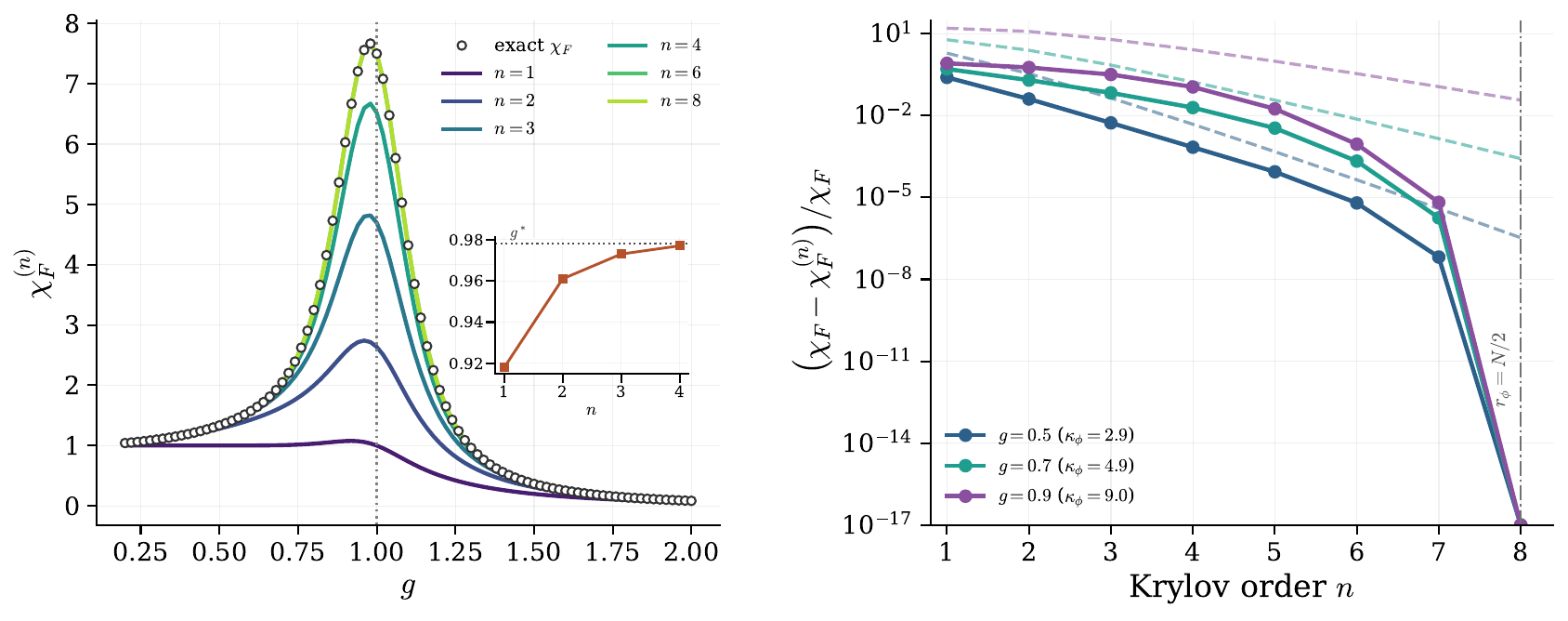}
\caption{Monotonic construction for the $N = 16$ 1D TFIM. \textbf{Left:} $\chi_F^{(n)} $ compared with $\chi_F$. The inset shows $g_n^*$ approaching $g^*=0.978$. \textbf{Right:} relative truncation error versus Krylov order. The errors remain below the geometric bound of Theorem~\ref{thm:geometric-convergence-main} and vanish at $n=N/2=8$, where the Krylov hierarchy terminates.}
    \label{fig:tfim_algebra}
\end{figure}

In figure \ref{fig:tfim_algebra}, we illustrate the moment--Krylov construction for the one-dimensional transverse-field Ising model (1D TFIM) with periodic boundary conditions,
\begin{equation}
    H(g)=-\sum_i Z_iZ_{i+1}-g\sum_i X_i,
    \qquad
    H_I=-\sum_i X_i .
\end{equation}
The response contains only $N/2$ distinct active gaps; the hierarchy terminates exactly at $n = N/2$. Away from the critical region, only a few Krylov orders are needed, while convergence slows near $g=1$. Interestingly, already at $n=2$ the finite-size transition is located within about $2\%$. Thus, low-order approximations may be sufficient when the goal is to detect the transition rather than reconstruct the full susceptibility.


\section{Convergence rate analysis}
\label{sec:convergence}

 Alongside the many-body gap and bandwidth,
\begin{equation}
  \Delta_H:=\min_{j\ge1}\bigl(E_j-E_0\bigr),
  \qquad
  W_H:=\max_{j}\bigl(E_j-E_0\bigr),
  \label{eq:gap-bandwidth}
\end{equation}
we define
\begin{equation}
  \Delta_\phi:=\!\!\min_{\substack{j\ge1\\ \braket{\Psi_j|B|\Psi_0}\neq0}}\!\!
  \bigl(E_j-E_0\bigr),
  \qquad
  \Lambda_\phi:=\!\!\max_{\substack{j\ge1\\ \braket{\Psi_j|B|\Psi_0}\neq0}}\!\!
  \bigl(E_j-E_0\bigr),
  \label{eq:response-edges}
\end{equation}
together with the response condition number $\kappa_\phi:=\frac{\Lambda_\phi}{\Delta_\phi}\;\ge\;1$. 
They satisfy: $\Delta_H\le\Delta_\phi\le\Lambda_\phi\le W_H,$ hence, $\kappa_\phi\;\le\;\kappa_H:=\frac{W_H}{\Delta_H}.$ By construction, 
\begin{equation}
    \operatorname{supp}\mu\subseteq[\Delta_\phi,\Lambda_\phi].
\label{eq:support}
\end{equation}

The reason for defining these quantities is that the convergence depends on the part of the spectrum that the perturbation actually probes. The degenerate case $\kappa_\phi=1$ corresponds to a response confined to a
single excitation energy, for which $\chi_F^{(1)}=\chi_F$ trivially.

\begin{lemma}[Residual representation]
\label{lem:residual}
Let
\begin{equation}
  \mathcal{Q}_{n+1}:=\bigl\{q\in\mathbb{R}[x]:\ \deg q\le n+1,\ q(0)=1,\
  q'(0)=0\bigr\}
  \label{eq:admissible-class}
\end{equation}
be the class of admissible residual polynomials. Then the order-$n$
truncation error admits the variational representation
\begin{equation}
  \chi_F-\chi_F^{(n)}
  =\min_{q\in\mathcal{Q}_{n+1}}
   \int_{\Delta_\phi}^{\Lambda_\phi}x^{-2}\,q(x)^2\,d\mu(x).
  \label{eq:residual}
\end{equation}
\end{lemma}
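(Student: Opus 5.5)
We will prove Lemma~\ref{lem:residual} by combining the projection characterization of $\chi_F^{(n)}$ from Theorem~\ref{thm:krylov-lower-bound} with the spectral measure $\mu$ of Eq.~\eqref{eq:response_spectral_measure}. The starting point is Eq.~\eqref{eq:projector}: $\chi_F-\chi_F^{(n)}=\|(\mathbb{I}-P_n)\ket{y}\|^2=\min_{\ket{x}\in\mathcal V_n}\|\ket{y}-\ket{x}\|^2$, since $P_n\ket{y}$ is the best approximation of $\ket{y}$ in $\mathcal V_n$. A generic element of $\mathcal V_n$ is $\ket{x}=p(K)K\ket{\phi}$ with $p$ any real polynomial of degree at most $n-1$. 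We may take the coefficients real, because $K$ is Hermitian and every quantity below is real; alternatively, one can observe that the minimizer has real coefficients because $\Gamma_n$ and $m_n$ are real.

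Next we rewrite the residual as a polynomial in $K$ acting on $\ket{y}$. Since $\ket{\phi}=B\ket{\Psi_0}$ has no component along $\ket{\Psi_0}$ (because $\langle\Psi_0|B|\Psi_0\rangle=0$), we have $KK^+\ket{\phi}=\ket{\phi}$, as already used in the proof of Theorem~\ref{thm:moment-krylov}. Hence $K\ket{\phi}=K^2\ket{y}$, and therefore
\begin{equation}
\ket{y}-p(K)K\ket{\phi}=\bigl(I-K^2p(K)\bigr)\ket{y}=q(K)\ket{y},\qquad q(x):=1-x^2p(x).
\end{equation}
The map $p\mapsto q=1-x^2p$ is a bijection between polynomials of degree at most $n-1$ and polynomials of degree at most $n+1$ with $q(0)=1$ and $q'(0)=0$. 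For the reverse direction, such a $q$ satisfies that $1-q$ is divisible by $x^2$. This gives exactly the class $\mathcal Q_{n+1}$ of Eq.~\eqref{eq:admissible-class}.

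Finally we evaluate $\|q(K)\ket{y}\|^2$ spectrally. Expanding $\ket{y}=\sum_{j\neq0}(E_j-E_0)^{-1}\braket{\Psi_j|\phi}\ket{\Psi_j}$ and using $\braket{\Psi_j|\phi}=\bra{\Psi_j}H_I\ket{\Psi_0}$ for $j\neq0$, we obtain
\begin{equation}
\|q(K)\ket{y}\|^2=\sum_{j\neq0}|\bra{\Psi_j}H_I\ket{\Psi_0}|^2\,(E_j-E_0)^{-2}q(E_j-E_0)^2=\int x^{-2}q(x)^2\,d\mu(x).
\end{equation}
By Eq.~\eqref{eq:support}, the integration range reduces to $[\Delta_\phi,\Lambda_\phi]$. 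Taking the minimum over $p$, equivalently over $q\in\mathcal Q_{n+1}$, gives Eq.~\eqref{eq:residual}, and the minimum is attained because the problem is a finite-dimensional least-squares problem.

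The main point needing care is the step $K\ket{\phi}=K^2\ket{y}$, which requires that $\ket{\phi}$ be orthogonal to the kernel of $K$. This in turn uses the nondegeneracy of the ground state and the centering of $B$. The only other subtlety is that minimizing over real rather than complex coefficients does not change the value, and the Hermiticity of $K$ together with the real Hankel structure handles this.
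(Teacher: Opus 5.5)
Your proof is correct and follows the paper's argument essentially step for step: the projection residual, the rewriting $\ket{y}-\ket{x}=q(K)\ket{y}$ with $q=1-x^2p\in\mathcal Q_{n+1}$ (your divisibility-by-$x^2$ remark makes the bijection explicit), and the spectral evaluation against $\mu$. The only difference is the reduction to real polynomials: the paper uses $|q|^2=q_{\mathrm R}^2+q_{\mathrm I}^2\ge q_{\mathrm R}^2$ with $q_{\mathrm R}\in\mathcal Q_{n+1}$, while your argument via the real normal equations also works, provided that when $\Gamma_n$ is singular you take the real part of any solution of $\Gamma_n c=m_n$, which still gives $P_n\ket{y}$.
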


\begin{proof}
Let $\ket{y}:=K^{+}\ket{\phi}$; since
$\ket{\phi}$ lies in the orthogonal complement of the ground state,
$K\ket{y}=\ket{\phi}$. Furthermore, from Eq. \ref{eq:projector}, 
\begin{equation}
  \chi_F-\chi_F^{(n)}
  =\bigl\|(\mathbb{I}-P_n)\ket{y}\bigr\|^{2}
  =\min_{\ket{v}\in\mathcal{V}_n}\bigl\|\ket{y}-\ket{v}\bigr\|^{2}.
  \label{eq:projection-residual}
\end{equation}
Every $\ket{v}\in\mathcal{V}_n$ can be written as
\begin{equation}
   \ket{v}=\sum_{j=1}^{n}c_jK^{j}\ket{\phi}=\sum_{j=1}^{n}c_jK^{j+1}\ket{y} 
\end{equation}
such that, 
\begin{equation}
    \ket{y}-\ket{v}=q(K)\ket{y}, \qquad q(x)=1-\sum_{j=1}^{n}c_j\,x^{\,j+1}.
\label{eq:residual-poly-form}
\end{equation}
$q(x)$ are thus the polynomials of degree at most $n+1$ with $q(0)=1$ and $q'(0)=0$.  Conversely, every polynomial satisfying these conditions can be written
in the form \eqref{eq:residual-poly-form}; the correspondence is bijective.

It is sufficient to minimize over real polynomials. Indeed, for a
complex polynomial $q=q_{\mathrm R}+iq_{\mathrm I}$ satisfying the same
constraints,
\begin{equation}
    \int x^{-2}|q(x)|^2\,d\mu(x)
    =
    \int x^{-2}
    \left[
        q_{\mathrm R}(x)^2+q_{\mathrm I}(x)^2
    \right]d\mu(x)
    \geq
    \int x^{-2}q_{\mathrm R}(x)^2\,d\mu(x),
\end{equation}
while $q_{\mathrm R}\in\mathcal{Q}_{n+1}$. Finally, the spectral representation of
$K$ gives
\begin{equation}
  \bigl\|q(K)\ket{y}\bigr\|^{2}
  =\sum_{m\neq0}\frac{|\braket{\Psi_m|\phi}|^{2}}{(E_m-E_0)^{2}}\,
   q(E_m-E_0)^{2}
  =\int_{\Delta_\phi}^{\Lambda_\phi}x^{-2}q(x)^{2}\,d\mu(x),
\end{equation}
which combined with Eq.~\eqref{eq:projection-residual} proves
Eq.~\eqref{eq:residual}.
\end{proof}

\subsection{Geometric convergence}

The relative error decreases exponentially in $n$, up to a polynomial
factor. The argument is the standard Chebyshev minimax construction,
modified to accommodate the constraint $q'(0)=0$ that arises because
$\mathcal{V}_n$ is generated from $K\ket{\phi}$ rather than $\ket{\phi}$.

\begin{theorem}[Geometric convergence]
\label{thm:geometric-convergence}
Assume Eq.~\eqref{eq:support} with $\kappa_\phi>1$, and set
\begin{equation}
  \theta_\phi:=\ln\!\left(\frac{\sqrt{\kappa_\phi}+1}
                               {\sqrt{\kappa_\phi}-1}\right)
  =2\operatorname{artanh}\bigl(\kappa_\phi^{-1/2}\bigr),
  \qquad
  \rho_\phi:=e^{-\theta_\phi}
  =\frac{\sqrt{\kappa_\phi}-1}{\sqrt{\kappa_\phi}+1}.
  \label{eq:theta-rho}
\end{equation}
Then for every integer $n\ge1$,
\begin{equation}
  \frac{\chi_F-\chi_F^{(n)}}{\chi_F}
  \;\le\;\frac{\bigl(1+n\sqrt{\kappa_\phi}\bigr)^{2}}
              {\cosh^{2}(n\theta_\phi)}
  \;\le\;4\bigl(1+n\sqrt{\kappa_\phi}\bigr)^{2}\rho_\phi^{\,2n}.
  \label{eq:geometric-convergence}
\end{equation}
The hierarchy therefore converges geometrically in the Krylov order $n$, at
a rate fixed by the response condition number.
\end{theorem}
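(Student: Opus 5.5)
The plan is to insert an explicit admissible residual polynomial into the variational representation of Lemma~\ref{lem:residual}, and then bound it uniformly on $[\Delta_\phi,\Lambda_\phi]$. For any $q\in\mathcal Q_{n+1}$, Eq.~\eqref{eq:residual} together with the support property Eq.~\eqref{eq:support} gives
\begin{equation*}
\chi_F-\chi_F^{(n)}\le\int_{\Delta_\phi}^{\Lambda_\phi}x^{-2}q(x)^2\,d\mu(x)\le\Bigl(\max_{x\in[\Delta_\phi,\Lambda_\phi]}|q(x)|\Bigr)^2\int x^{-2}\,d\mu(x).
\end{equation*}
The last integral equals $\chi_F$ by Eq.~\eqref{eq:fs_negative_moment}. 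The relative error is therefore bounded by $\|q\|_\infty^2$ on the response interval, and everything reduces to a constrained Chebyshev problem.

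\emph{Construction.} Let $\ell(x):=(\Lambda_\phi+\Delta_\phi-2x)/(\Lambda_\phi-\Delta_\phi)$, which maps $[\Delta_\phi,\Lambda_\phi]$ onto $[-1,1]$. Take the standard shifted Chebyshev polynomial
\begin{equation*}
p(x):=\frac{T_n(\ell(x))}{T_n(\ell(0))},
\end{equation*}
which has degree $n$ and satisfies $p(0)=1$. It generally violates $p'(0)=0$, so we correct it with a linear factor:
\begin{equation*}
q(x):=p(x)\bigl(1+a x\bigr),\qquad a:=-p'(0).
\end{equation*}
Then $\deg q\le n+1$, $q(0)=1$, and $q'(0)=p'(0)+a=0$, so $q\in\mathcal Q_{n+1}$. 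On the interval, $|T_n(\ell(x))|\le1$ and $|1+ax|\le 1+|a|\Lambda_\phi$, hence
\begin{equation*}
\|q\|_\infty\le\frac{1+|a|\Lambda_\phi}{T_n(\ell(0))}.
\end{equation*}

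\emph{Identifying the constants.} Write $t=\kappa_\phi^{-1/2}$. First, $\ell(0)=(\kappa_\phi+1)/(\kappa_\phi-1)=(1+t^2)/(1-t^2)=\cosh(2\operatorname{artanh}t)=\cosh\theta_\phi$. It follows that $T_n(\ell(0))=\cosh(n\theta_\phi)$.

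Second, for the derivative, $T_n'(\cosh\theta)=n\sinh(n\theta)/\sinh\theta$ and $\ell'(0)=-2/(\Lambda_\phi-\Delta_\phi)$. This gives
\begin{equation*}
|a|\Lambda_\phi=\frac{2n\Lambda_\phi\tanh(n\theta_\phi)}{(\Lambda_\phi-\Delta_\phi)\sinh\theta_\phi}\le\frac{2n\kappa_\phi}{(\kappa_\phi-1)\sinh\theta_\phi}.
\end{equation*}
Using $\sinh\theta_\phi=2t/(1-t^2)=2\sqrt{\kappa_\phi}/(\kappa_\phi-1)$, this collapses to $|a|\Lambda_\phi\le n\sqrt{\kappa_\phi}$. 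That yields the first inequality in Eq.~\eqref{eq:geometric-convergence}.

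\emph{Conclusion and main obstacle.} The second inequality follows from $\cosh(n\theta_\phi)\ge e^{n\theta_\phi}/2=\rho_\phi^{-n}/2$. The main obstacle I expect is the derivative constraint $q'(0)=0$. A naive degree-$(n+1)$ Chebyshev polynomial does not satisfy it, and one must ensure the correction costs only a polynomial factor rather than degrading the rate. The linear-factor trick above handles this, and the hyperbolic identities make the cost exactly $1+n\sqrt{\kappa_\phi}$. I would double-check these identities carefully, since the clean cancellation $\kappa_\phi/\sqrt{\kappa_\phi}$ is what produces the stated prefactor.
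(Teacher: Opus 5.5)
Your proposal is correct and follows the paper's proof essentially step for step. It uses the same reduction via Lemma~\ref{lem:residual} to a uniform bound on $[\Delta_\phi,\Lambda_\phi]$, the same normalized Chebyshev polynomial corrected by the linear factor $1+\alpha_n x$, and the same hyperbolic identities giving $\alpha_n\Lambda_\phi\le n\sqrt{\kappa_\phi}$ and $\cosh(n\theta_\phi)\ge\tfrac12 e^{n\theta_\phi}$; the only cosmetic difference is that you bound $|1+ax|\le 1+|a|\Lambda_\phi$ directly rather than first noting $a>0$, which is equally valid.
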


\begin{proof}
By Lemma~\ref{lem:residual}, for any $q\in\mathcal{Q}_{n+1}$,
\begin{align}
    \frac{\chi_F-\chi_F^{(n)}}{\chi_F}
    &\leq
    \frac{
        \displaystyle
        \int_{\Delta_\phi}^{\Lambda_\phi}
        x^{-2}q(x)^2\,d\mu(x)
    }{
        \displaystyle
        \int_{\Delta_\phi}^{\Lambda_\phi}
        x^{-2}\,d\mu(x)
    }
    \nonumber\\
    &\leq
    \left(
        \max_{x\in[\Delta_\phi,\Lambda_\phi]}
        |q(x)|
    \right)^2.
    \label{eq:uniform-residual-bound}
\end{align}
It is therefore sufficient to find a single polynomial that is uniformly small on the response interval. A natural candidate is then a polynomial from the Chebyshev families. The next step is to construct such polynomial.

The affine transformation
\begin{equation}
    w(x)
    :=
    \frac{
        \Lambda_\phi+\Delta_\phi-2x
    }{
        \Lambda_\phi-\Delta_\phi
    }.
    \label{eq:affine-map}
\end{equation}
satisfies
$w(\Delta_\phi)=1$ and $w(\Lambda_\phi)=-1$, hence sends
$[\Delta_\phi,\Lambda_\phi]$ onto $[-1,1]$, and
\begin{equation}
  z_\phi:=w(0)=\frac{\Lambda_\phi+\Delta_\phi}{\Lambda_\phi-\Delta_\phi}
  =\frac{\kappa_\phi+1}{\kappa_\phi-1}>1 .
  \label{eq:z-phi}
\end{equation}
Let $T_n$ denote the Chebyshev polynomial of the first kind and define
\begin{equation}
    \varphi_n(x)
    :=
    \frac{T_n\!\left(w(x)\right)}
         {T_n(z_\phi)}.
    \label{eq:normalized-chebyshev}
\end{equation}
Since $w(0)=z_\phi$, $ \varphi_n(0)=1.$ Furthermore, $|T_n(u)|\leq1$ for every $u\in[-1,1]$, and hence
\begin{equation}
  \max_{x\in[\Delta_\phi,\Lambda_\phi]}|\varphi_n(x)|\le\frac{1}{T_n(z_\phi)} .
  \label{eq:phi-uniform-bound}
\end{equation}
In general $\varphi_n'(0)\neq0$, so we introduce the corrected polynomial
\begin{equation}
  q_n(x):=\varphi_n(x)\bigl(1+\alpha_nx\bigr),
  \qquad \alpha_n:=-\varphi_n'(0).
  \label{eq:corrected-chebyshev}
\end{equation}
Then $q_n(0)=\varphi_n(0)=1$ and
$q_n'(0)=\varphi_n'(0)+\alpha_n\varphi_n(0)=0$, while
$\deg q_n\le n+1$; hence 
\begin{equation}
    q_n\in\mathcal{Q}_{n+1}.
\end{equation}

Now, our task is to bound $q_n$ using the usual Chebyshev identities. Since $z_\phi>1$ there is a unique $\theta_\phi>0$ with
$z_\phi=\cosh\theta_\phi$. The identities $T_n(\cosh\theta)=\cosh(n\theta)$
and $T_n'(\cosh\theta)=n\sinh(n\theta)/\sinh\theta$, together with
$\tanh(n\theta_\phi)\le1$ and
$\sinh\theta_\phi=\sqrt{z_\phi^{2}-1}=2\sqrt{\kappa_\phi}/(\kappa_\phi-1)$,
give
\begin{equation}
  \frac{T_n'(z_\phi)}{T_n(z_\phi)}
  =\frac{n\tanh(n\theta_\phi)}{\sinh\theta_\phi}
  \le\frac{n(\kappa_\phi-1)}{2\sqrt{\kappa_\phi}} .
  \label{eq:chebyshev-derivative-bound}
\end{equation}

Differentiating $\varphi_n$ with $w'=-2/(\Lambda_\phi-\Delta_\phi)$ yields
$\alpha_n=2\,T_n'(z_\phi)/[(\Lambda_\phi-\Delta_\phi)T_n(z_\phi)]$, and
inserting $\Lambda_\phi-\Delta_\phi=\Delta_\phi(\kappa_\phi-1)$ together
with Eq.~\eqref{eq:chebyshev-derivative-bound} gives
$0<\alpha_n\le n/(\Delta_\phi\sqrt{\kappa_\phi})$, i.e.
\begin{equation}
  \alpha_n\Lambda_\phi\le n\sqrt{\kappa_\phi}.
  \label{eq:alpha-lambda-bound}
\end{equation}
As $\alpha_n>0$, the factor $1+\alpha_nx$ increases on the response
interval and is maximal at $x=\Lambda_\phi$. Combining this with
Eq.~\eqref{eq:phi-uniform-bound} and $T_n(z_\phi)=\cosh(n\theta_\phi)$,
\begin{equation}
  \max_{x\in[\Delta_\phi,\Lambda_\phi]}|q_n(x)|
  \le\frac{1+\alpha_n\Lambda_\phi}{T_n(z_\phi)}
  \le\frac{1+n\sqrt{\kappa_\phi}}{\cosh(n\theta_\phi)} ,
\end{equation}
which substituted into Eq.~\eqref{eq:uniform-residual-bound} proves the
first inequality in Eq.~\eqref{eq:geometric-convergence}.

Using $e^{\theta_\phi}=z_\phi+\sqrt{z_\phi^{2}-1}$ and
Eq.~\eqref{eq:z-phi},
\begin{equation}
  e^{\theta_\phi}
  =\frac{\kappa_\phi+2\sqrt{\kappa_\phi}+1}{\kappa_\phi-1}
  =\frac{\bigl(\sqrt{\kappa_\phi}+1\bigr)^{2}}
        {\bigl(\sqrt{\kappa_\phi}+1\bigr)\bigl(\sqrt{\kappa_\phi}-1\bigr)}
  =\frac{\sqrt{\kappa_\phi}+1}{\sqrt{\kappa_\phi}-1},
  \label{eq:sqrt-mechanism}
\end{equation}
so $e^{-\theta_\phi}=\rho_\phi$ as claimed in Eq.~\eqref{eq:theta-rho}.
The elementary estimate $\cosh(n\theta_\phi)\ge\tfrac12e^{n\theta_\phi}$
then gives $\cosh^{-2}(n\theta_\phi)\le4\rho_\phi^{\,2n}$, which yields the
second inequality.
\end{proof}

\subsection{Worst-case budget}

We now provide a sufficient Krylov depth in the worst case; the depth actually
required may be considerably smaller.

\begin{corollary}[Krylov-depth and moment-order budget]
\label{cor:moment-budget}
Let $\epsilon\in(0,1)$ and $\kappa_\phi>1$. Every Krylov order
\begin{equation}
  n\;\ge\;n_\epsilon
  :=\left\lceil
      \frac{\sqrt{\kappa_\phi}}{2}\,
      \ln\!\left(\frac{16\kappa_\phi^{2}}{e^{2}\epsilon}\right)
    \right\rceil
  \label{eq:krylov-depth-budget}
\end{equation}
guarantees $(\chi_F-\chi_F^{(n)})/\chi_F\le\epsilon$, and the order-$n$
Hankel construction requires moments only up to order
\begin{equation}
  k_{\max}=2n .
  \label{eq:max-moment-order}
\end{equation}
The sufficient Krylov depth, therefore, scales as
$n_\epsilon=\mathcal{O}\bigl[\sqrt{\kappa_\phi}
\bigl(\ln\kappa_\phi+\ln\epsilon^{-1}\bigr)\bigr]$.
\end{corollary}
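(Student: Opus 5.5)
The plan is to start from the explicit bound of Theorem~\ref{thm:geometric-convergence}, $(\chi_F-\chi_F^{(n)})/\chi_F\le 4(1+n\sqrt{\kappa_\phi})^2\rho_\phi^{2n}$. I would bound its right-hand side by a single exponential in the rescaled variable $t:=n/\sqrt{\kappa_\phi}$ and then solve for $t$. The moment-order statement $k_{\max}=2n$ needs no new argument: by Theorem~\ref{thm:moment-krylov}, $\chi_F^{(n)}=m_n^\dagger\Gamma_n^{-1}m_n$, and $\Gamma_n$ contains $M_2,\dots,M_{2n}$ while $m_n$ contains $M_0,\dots,M_{n-1}$.

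\textbf{Step 1: exponential rate.} From Eq.~\eqref{eq:theta-rho}, $\rho_\phi=e^{-\theta_\phi}$ with $\theta_\phi=2\operatorname{artanh}(\kappa_\phi^{-1/2})$. Since $\operatorname{artanh}(u)\ge u$ for $u\in[0,1)$, we get $\theta_\phi\ge 2/\sqrt{\kappa_\phi}$, and therefore
\begin{equation}
\rho_\phi^{2n}\le e^{-4n/\sqrt{\kappa_\phi}}=e^{-4t}.
\end{equation}

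\textbf{Step 2: polynomial prefactor.} Because $n\ge1$ and $\kappa_\phi>1$, we have $n\sqrt{\kappa_\phi}\ge1$. Hence $1+n\sqrt{\kappa_\phi}\le 2n\sqrt{\kappa_\phi}=2t\kappa_\phi$, and
\begin{equation}
4(1+n\sqrt{\kappa_\phi})^2\rho_\phi^{2n}\le 16\kappa_\phi^2\,t^2e^{-4t}.
\end{equation}

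\textbf{Step 3: absorb the prefactor.} Write $t^2e^{-4t}=(te^{-t})^2e^{-2t}$. The function $te^{-t}$ is maximized at $t=1$, so $(te^{-t})^2\le e^{-2}$. This gives
\begin{equation}
\frac{\chi_F-\chi_F^{(n)}}{\chi_F}\le \frac{16\kappa_\phi^2}{e^2}\,e^{-2t}.
\end{equation}
Requiring the right-hand side to be at most $\epsilon$ is equivalent to $t\ge\tfrac12\ln\bigl(16\kappa_\phi^2/(e^2\epsilon)\bigr)$. In terms of $n$, this is exactly $n\ge n_\epsilon$, with the ceiling supplying integrality. The logarithm is positive because $16\kappa_\phi^2/(e^2\epsilon)>16/e^2>1$, so the threshold is meaningful.

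\textbf{Step 4: scaling.} Expanding the logarithm gives $n_\epsilon=\mathcal O\bigl[\sqrt{\kappa_\phi}(\ln\kappa_\phi+\ln\epsilon^{-1})\bigr]$. The main obstacle is Step 3: the polynomial factor $(1+n\sqrt{\kappa_\phi})^2$ must be absorbed without circularly depending on $n$. I plan to handle this by sacrificing half of the exponential decay $e^{-4t}$ against $t^2$, which is precisely what produces the constant $e^{-2}$ in $n_\epsilon$.
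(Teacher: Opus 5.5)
Your proposal is correct and follows essentially the paper's argument. It uses the same bound $1+n\sqrt{\kappa_\phi}\le 2n\sqrt{\kappa_\phi}$, the same estimate $\theta_\phi\ge 2/\sqrt{\kappa_\phi}$ from $\operatorname{artanh}(u)\ge u$, and the same trick of spending half the exponential decay to absorb the $n^2$ prefactor via the maximum of $te^{-ct}$, which gives the identical constant $16\kappa_\phi^2/e^2$. The only difference is the order of steps: you apply the lower bound on $\theta_\phi$ first and optimize in $t=n/\sqrt{\kappa_\phi}$, whereas the paper maximizes $ne^{-n\theta_\phi/2}$ first and only afterwards bounds $\theta_\phi^{-2}$ and $\theta_\phi^{-1}$.
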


\begin{proof}
Since $1+n\sqrt{\kappa_\phi}\le2n\sqrt{\kappa_\phi}$ for $n\ge1$,
Theorem~\ref{thm:geometric-convergence} gives
\begin{equation}
  \frac{\chi_F-\chi_F^{(n)}}{\chi_F}
  \le16\kappa_\phi n^{2}e^{-2n\theta_\phi}
  =16\kappa_\phi\bigl(ne^{-n\theta_\phi/2}\bigr)^{2}e^{-n\theta_\phi}.
  \label{eq:error-factorization}
\end{equation}
The function $f(t)=te^{-t\theta_\phi/2}$ has
$f'(t)=e^{-t\theta_\phi/2}(1-t\theta_\phi/2)$ and hence a unique maximum
$f(2/\theta_\phi)=2/(e\theta_\phi)$, so that
\begin{equation}
  \frac{\chi_F-\chi_F^{(n)}}{\chi_F}
  \le\frac{64\kappa_\phi}{e^{2}\theta_\phi^{2}}\,e^{-n\theta_\phi}.
  \label{eq:error-before-kappa-bound}
\end{equation}
Applying $\operatorname{artanh}(u)\ge u$ on $[0,1)$ with
$u=\kappa_\phi^{-1/2}$ to Eq.~\eqref{eq:theta-rho} gives
$\theta_\phi\ge2/\sqrt{\kappa_\phi}$. Thus,
$\theta_\phi^{-2}\le\kappa_\phi/4$ and
\begin{equation}
  \frac{\chi_F-\chi_F^{(n)}}{\chi_F}
  \le\frac{16\kappa_\phi^{2}}{e^{2}}\,e^{-n\theta_\phi}.
  \label{eq:simplified-error-bound}
\end{equation}
The right-hand side is at most $\epsilon$ as soon as
$n\ge\theta_\phi^{-1}\ln\bigl(16\kappa_\phi^{2}/e^{2}\epsilon\bigr)$, and
the same lower bound on $\theta_\phi$ gives
$\theta_\phi^{-1}\le\sqrt{\kappa_\phi}/2$; taking the ceiling yields
Eq.~\eqref{eq:krylov-depth-budget}.

Finally, the order-$n$ Gram matrix has entries
$(\Gamma_n)_{ij}=M_{i+j+2}$ with $0\le i,j\le n-1$, whose largest index is
$2n$, attained at $i=j=n-1$; the moment vector involves only lower orders.
Hence $M_{2n}$ is the highest moment required, proving
Eq.~\eqref{eq:max-moment-order}.
\end{proof}
\section{Classical-Shadow Implementation}
\label{sec:classical_shadows}
Classical shadows provide a powerful method for estimating
many expectation values of an unknown quantum state without performing full
state tomography. Let $\rho$ be an unknown $N$-qubit state, and let $\mathcal{O}=\{O_i\}_{i=1}^{M}$ be a collection of observables whose expectation values $ o_i := \Tr(O_i\rho)$ we wish to estimate. The classical-shadow protocol produces an estimate $\hat{o}_i$. Interestingly, the choice of $\mathcal{O}$ is not fixed a priori, and the same set of measurement data can be reused for estimating many observables.

Within an additive error $\epsilon$ (with a failure probability of at most $\delta$), the required number of samples of $\rho$ scales as
\begin{align}
S&=
\mathcal{O}
\left(
\frac{\log ( M/\delta)}{\epsilon^2}
\max_i
\operatorname{Var}_{\mathcal{U}}
\!\left[
\hat{o}_i
\right]
\right) \nonumber \\
&=\mathcal{O}\left( \frac{\log(M/\delta)}{\epsilon^2} \max_{1 \le i \le M} \|O_i\|_{\text{shadow}}^2 \right) .
\label{eq:shadow_sample_complexity}
\end{align}
Therefore, the sample complexity depends only logarithmically on the number of observables $M$, while the dominant cost comes via the largest shadow norm among the observables of interest.

The sample complexity also depends strongly on the distribution $\mathcal{U}$ used in the randomized measurement. There are two popular choices. The first is a random $N$-qubit Clifford circuit, which yields a powerful estimator but requires $\mathcal{O}(N^2\log N)$ entangling gates per sampled circuit. The sample complexity is independent of the locality of $O$, but in the worst case, the cost scales with the full Hilbert-space dimension $d=2^N$. A more NISQ-friendly alternative is to use local Clifford measurements, or
equivalently, to measure each qubit independently in a random Pauli basis. In
this case, the protocol requires only single-qubit random gates and
computational-basis measurements. 

We will now outline the local Pauli classical-shadow protocol. For each experimental shot
$s=1,\ldots,S$, we independently choose, for every qubit $j$, $P_j^{(s)} \sim \mathrm{Unif}\{X_j,Y_j,Z_j\},$ and measure qubit $j$ in the eigenbasis of $P_j^{(s)}$ to get the outcome $\mu_j^{(s)}\in\{+1,-1\}.$ The classical data from one shot are therefore
$\{(P_j^{(s)},\mu_j^{(s)})\}_{j=1}^{N}$, and the $N$-qubit classical-shadow snapshot is
\begin{equation}
    \hat{\rho}^{(s)}
    =
    \bigotimes_{j=1}^{N}
    \frac{1}{2}
    \left(
        I+3\mu_j^{(s)}P_j^{(s)}
    \right).
    \label{eq:local_pauli_snapshot}
\end{equation}

By construction, this snapshot is unbiased, $\mathbb{E}\!\left[\hat{\rho}^{(s)}\right]=\rho$, and for each observable $O_i$, the single-shot estimator $  X_i^{(s)} := \Tr\!\left(O_i\hat{\rho}^{(s)}\right)$
satisfies $ \mathbb{E}\!\left[X_i^{(s)}\right] =\Tr(O_i\rho)=o_i .$ One can suppress the failure probability using a median-of-means estimator. For this, we partition the $S$ shots into $K$ groups $G_1,\ldots,G_K$ of size $m=S/K$, and define $ \bar{o}_{i}^{(g)}  =  \frac{1}{m}  \sum_{s\in G_g}  X_i^{(s)}$ for $g=1,\ldots,K .$ The final estimator is
\begin{equation}
    \hat{o}_i
    =
    \operatorname{median}
    \left\{
        \bar{o}_{i}^{(1)},
        \bar{o}_{i}^{(2)},
        \ldots,
        \bar{o}_{i}^{(K)}
    \right\}.
    \label{eq:median_of_means_shadow}
\end{equation}
With $S$ samples from Eq.~\eqref{eq:shadow_sample_complexity}, all $M$ expectation values are estimated to additive accuracy $\epsilon$ with failure probability at most $\delta$.  One downside of this approach could be that the shadow norm $\|O_i\|_{\text{shadow}}^2$ depends on the weight $w$ of the observables as  $\|O_i\|_{\text{shadow}}^2 \leq 4^w \|O_i\|_\infty^2 $. For local Pauli shadows, a Pauli string of weight $w$ has $\|P\|_{\mathrm{shadow}}^2 = 3^w,$ hence the protocol is efficient for estimating many low-weight observables.

When the observables are known in advance, the randomized protocol can be overkill, since many random measurement settings do not contribute. In this case, one may use a
derandomized classical-shadow protocol \cite{Huang_2021}, in which the Pauli measurement settings are chosen deterministically, or greedily, to cover the desired observables more
efficiently. The benefit is that the number of required measurements can be reduced substantially.

\begin{proof}[Proof of Theorem \ref{thm:certified-shadow-bound}]
  Let $\delta m_n:=\widehat{\widetilde m}_n-\widetilde m_n$,
$\delta\Gamma_n:=\widehat{\widetilde\Gamma}_n-\widetilde\Gamma_n$, and $ c_n:=\widetilde\Gamma_n^{-1}\widetilde m_n$. The moment bound of Eq.~\eqref{eq:sh-uniform-moment-error} gives
\begin{equation}
    \|\delta m_n\|_2\le\sqrt n\,\eta,
  \qquad
  \|\delta\Gamma_n\|_2\le\|\delta\Gamma_n\|_{\mathrm F}\le n\eta .
\end{equation}
Weyl's inequality gives the admissibility condition
\begin{align}
    \lambda_{\min}(\widetilde\Gamma_n)
    &\geq
    \lambda_{\min}
    \bigl(
        \widehat{\widetilde\Gamma}_n
    \bigr)
    -
    \|\delta\Gamma_n\|_2
    \geq
    \widehat\gamma_n-n\eta
    >
    0 .
\end{align}
Since $\widetilde m_n=\widetilde\Gamma_n c_n,$ and setting $r_n:=\delta m_n-\delta\Gamma_n c_n$, we get
\begin{align}
    \widehat{\widetilde m}_n
    &=
    \widetilde m_n+\delta m_n =
    \widehat{\widetilde\Gamma}_n c_n+r_n .
    \label{eq:normalized-residual-representation}
\end{align}

Substituting Eq.~\eqref{eq:normalized-residual-representation}
into $\widehat{\widetilde\chi}_F^{(n)}$ gives
\begin{align}
    \widehat{\widetilde\chi}_F^{(n)}
    &=
    \bigl(
        \widehat{\widetilde\Gamma}_n c_n+r_n
    \bigr)^\dagger
    \widehat{\widetilde\Gamma}_n^{-1}
    \bigl(
        \widehat{\widetilde\Gamma}_n c_n+r_n
    \bigr)
    \nonumber\\
    &=
    c_n^\dagger
    \widehat{\widetilde\Gamma}_n c_n
    +
    2\operatorname{Re}
    \bigl(
        c_n^\dagger r_n
    \bigr)
    +
    r_n^\dagger
    \widehat{\widetilde\Gamma}_n^{-1}
    r_n.
\end{align}
On the other hand,
\begin{equation}
    \widetilde\chi_F^{(n)}
    =
    \widetilde m_n^\dagger
    \widetilde\Gamma_n^{-1}
    \widetilde m_n
    =
    c_n^\dagger\widetilde\Gamma_n c_n.
\end{equation}
Using
$\widehat{\widetilde\Gamma}_n
=\widetilde\Gamma_n+\delta\Gamma_n$, we get
\begin{align}
    \widehat{\widetilde\chi}_F^{(n)}
    -
    \widetilde\chi_F^{(n)}
    &=
    2\operatorname{Re}
    \bigl(
        c_n^\dagger\delta m_n
    \bigr)
    -
    c_n^\dagger\delta\Gamma_n c_n+
    r_n^\dagger
    \widehat{\widetilde\Gamma}_n^{-1}
    r_n.
    \label{eq:normalized-exact-reconstruction-error}
\end{align}

The first two terms of Eq.~\eqref{eq:normalized-exact-reconstruction-error} satisfy
\begin{align}
    &2\operatorname{Re}
    \bigl(
        c_n^\dagger\delta m_n
    \bigr)
    -
    c_n^\dagger\delta\Gamma_n c_n
    \nonumber\\
    &\leq
    2\|c_n\|_2\|\delta m_n\|_2
    +
    \|c_n\|_2^2\|\delta\Gamma_n\|_2
    \nonumber\\
    &\leq
    \left(
        2\sqrt{n}\,\|c_n\|_2
        +
        n\|c_n\|_2^2
    \right)\eta.
    \label{eq:normalized-linear-error-bound}
\end{align}
Furthermore, $\left\|
        \widehat{\widetilde\Gamma}_n^{-1}
    \right\|_2
    =
    \frac{1}{\widehat\gamma_n},$ and the third term of Eq.~\eqref{eq:normalized-exact-reconstruction-error} satisfies
\begin{align}
    r_n^\dagger
    \widehat{\widetilde\Gamma}_n^{-1}
    r_n
    &\leq
    \frac{\|r_n\|_2^2}{\widehat\gamma_n}
    \nonumber\\
    &\leq
    \frac{
        \bigl(
            \|\delta m_n\|_2
            +
            \|\delta\Gamma_n\|_2\|c_n\|_2
        \bigr)^2
    }{
        \widehat\gamma_n
    }
    \nonumber\\
    &\leq
    \frac{
        \bigl(
            \sqrt{n}
            +
            n\|c_n\|_2
        \bigr)^2\eta^2
    }{
        \widehat\gamma_n
    }.
    \label{eq:normalized-quadratic-error-bound}
\end{align}

It remains to bound $\|c_n\|_2$ by data-dependent quantities, i.e., by the estimated normalized moments. Subtracting $\widetilde\Gamma_n c_n
    =
    \widetilde m_n$ from $\widehat{\widetilde\Gamma}_n\widehat c_n
    =
    \widehat{\widetilde m}_n$ gives
\begin{equation}
    \widehat{\widetilde\Gamma}_n
    \bigl(
        \widehat c_n-c_n
    \bigr)
    =
    \delta m_n-\delta\Gamma_n c_n
    =
    r_n.
    \label{eq:normalized-coefficient-residual}
\end{equation}
Using the triangle inequality,
\begin{align}
    \|c_n\|_2
    &\leq
    \|\widehat c_n\|_2
    +
    \|\widehat c_n-c_n\|_2
    \nonumber\\
    &\leq \|\widehat c_n\|_2 + 
    \frac{
        \|\delta m_n\|_2
        +
        \|\delta\Gamma_n\|_2\|c_n\|_2
    }{
        \widehat\gamma_n
    }\nonumber \\
    &\leq
    \|\widehat c_n\|_2
    +
    \frac{\sqrt{n}\,\eta}{\widehat\gamma_n}
    +
    \frac{n\eta}{\widehat\gamma_n}
    \|c_n\|_2 \nonumber \\
    &\leq
    \frac{
        \|\widehat c_n\|_2
        +
        \sqrt{n}\,\eta/\widehat\gamma_n
    }{
        1-n\eta/\widehat\gamma_n
    }
    =:
    \overline c_n.
\label{eq:normalized-exact-coefficient-bound}
\end{align}

The right-hand sides of Eqs.~\eqref{eq:normalized-linear-error-bound} and
\eqref{eq:normalized-quadratic-error-bound} are monotonically
increasing functions of $\|c_n\|_2$. Substituting
Eq.~\eqref{eq:normalized-exact-coefficient-bound} therefore gives
\begin{align}
    2\operatorname{Re}
    \bigl(
        c_n^\dagger\delta m_n
    \bigr)
    -
    c_n^\dagger\delta\Gamma_n c_n
    &\leq
     L_n\eta,
    \\
    r_n^\dagger
    \widehat{\widetilde\Gamma}_n^{-1}
    r_n
    &\leq
    R_n \eta^2.
\end{align}
Combining these estimates with
Eq.~\eqref{eq:normalized-exact-reconstruction-error} gives
\begin{equation}
    \widetilde\chi_F^{(n)}
    \geq
    \widehat{\widetilde\chi}_F^{(n)}
    -
     L_n\eta
    -
     R_n \eta^2.
\end{equation}

Finally, the normalized Krylov hierarchy satisfies $\widetilde\chi_F
    \geq
    \widetilde\chi_F^{(n)}$, and $\chi_F
    =
    \Omega^{-2}\widetilde\chi_F$. Combining these two facts with the previous display, we obtain the claim
\begin{equation}
    \chi_F
    \geq
    \Omega^{-2}
    \left[
        \widehat{\widetilde\chi}_F^{(n)}
        -
        L_n\eta
        -
         R_n \eta^2
    \right] .
\end{equation}
\end{proof}

\end{document}